\documentclass{article}

 \usepackage[preprint]{neurips_2026}

\usepackage[utf8]{inputenc} 
\usepackage[T1]{fontenc}    
\usepackage[table]{xcolor}
\definecolor{dark-blue}{RGB}{0,0,191}
\usepackage[hidelinks,colorlinks=true,citecolor=dark-blue,linkcolor=dark-blue,urlcolor=dark-blue, backref=page]{hyperref}
\usepackage{url}            
\usepackage{booktabs}       
\usepackage{amsfonts}       
\usepackage{nicefrac}       
\usepackage{microtype}      
\usepackage{xcolor}         

\title{Fast Rates in $\alpha$-Potential Games via Regularized Mirror Descent}

\usepackage{pifont}

\usepackage{amsmath}
\usepackage{amssymb}
\usepackage{mathtools}
\usepackage{amsthm}
\usepackage{color}
\mathtoolsset{showonlyrefs}

\usepackage{algorithm}
\usepackage{algpseudocode}

\usepackage{enumitem}

\usepackage[capitalize,noabbrev]{cleveref}

\usepackage{xurl} 
\usepackage{listings}
\usepackage{xcolor}
\usepackage{thmtools}

\usepackage{pifont}
\theoremstyle{plain}
\newtheorem{theorem}{Theorem}[section]

\newtheorem{lemma}[theorem]{Lemma}
\newtheorem{corollary}[theorem]{Corollary}
\theoremstyle{definition}

\newtheorem{assumption}[theorem]{Assumption}
\theoremstyle{remark}

\usepackage{algorithm}
\usepackage{algorithmicx}
\usepackage{algpseudocode}

\DeclareMathOperator*{\argmin}{argmin}
\DeclareMathOperator*{\argmax}{argmax}

\usepackage[textsize=tiny]{todonotes}

\author{%
  Claire Chen\\
  The Division of Physics,
Mathematics and Astronomy \\
California Institute of Technology\\
  \texttt{clairechen@caltech.edu} \\
    \And
    Yuheng Zhang \\
  Department of Computer Science \\
  University of Illinois Urbana-Champaign \\
  \texttt{ yuhengz2@illinois.edu} \\
}

\begin{document}

\maketitle
\begin{abstract}
An $\alpha$-potential game is a multi-player non-cooperative interaction in which a global potential function approximates individual player rewards up to a structural bias $\alpha$. While identifying a Nash Equilibrium (NE) in generic general-sum games is known to be computationally intractable, the potential game structure enables tractable NE identification. In this paper, we study the offline learning of NE in $\alpha$-potential games using KL regularization. To analyze this process, we propose a novel \textit{Reference-Anchored} offline data coverage framework that anchors data requirements to a known reference policy rather than an unknown optimum. Building on this, we propose Offline Potential Mirror Descent (OPMD), a decentralized algorithm that achieves an accelerated $\widetilde{\mathcal{O}}(1/n)$ statistical rate, surpassing the standard $\widetilde{\mathcal{O}}(1/\sqrt{n})$ rate typical of offline multi-agent learning. This work characterizes the first fast-rate offline learning approach for $\alpha$-potential games.
\end{abstract}

\section{Introduction}

Finding a Nash equilibrium in general-sum games is PPAD-complete \citep{daskalakis2009complexity, chen2009settling}, establishing that equilibrium learning is computationally intractable in generic general-sum games. This fundamental barrier poses a significant challenge for multi-agent reinforcement learning (MARL), where the goal is to derive equilibrium policies in high-dimensional environments. Fortunately, recent theoretical advancements show that the $\alpha$-potential game framework provides a general form for dynamic $N$-player non-cooperative games \citep{guo2023markov, guo2025potential, li2025alpha}. This structure  reduces the complex task of identifying Nash equilibria to a single-objective optimization problem, as any optimizer of the global potential function is guaranteed to be an equilibrium within a corresponding $\alpha$-error bound \citep{guo2023markov}. Such a reduction provides a mathematically tractable framework that bypasses the complexities of generic general-sum dynamics.

Crucially, it has been shown that the $\alpha$-potential structure is universal. Specifically, any finite non-cooperative game can be characterized as an $\alpha$-potential game, where the magnitude of $\alpha$ represents the inherent structural bias of the system \citep{guo2025markov, li2025alpha}. This universality encompasses practically significant scenarios such as traffic congestion, autonomous driving, smart grid management, and perturbed team games \citep{guo2025markov}. This structural alignment reduces the strategic search for Nash equilibria to a single-objective optimization task, thereby bypassing the PPAD-complete complexity inherent in generic general-sum games \citep{guo2025potential, li2025alpha}.

In the offline setting, the difficulty of learning stable strategies is compounded by distribution shift, where the learned policies explore regions of the state-action space poorly represented in the historical logs \citep{levine2020offline}. To establish sample-efficient recovery, standard methodologies typically rely on the principle of explicit pessimism \citep{jin2021pessimism, zhang2023offline}. These approaches utilize manually designed lower confidence bounds (LCB) or uncertainty bonuses to penalize values in regions with low data support \citep{cui2022provably, zhong2022pessimistic}. While theoretically robust, these pessimistic mechanisms suffer from two primary drawbacks. First, they often require the construction of complex uncertainty quantifiers over joint action spaces, leading to significant computational overhead. Second, existing coverage conditions for these methods are typically anchored to an unknown optimal equilibrium, introducing a conceptual circularity—known as the "Curse of the Unknown Optimum"—that makes the assumptions difficult to verify in practice.

In this work, we establish that KL regularization relative to a fixed and known reference policy provides a standalone, pessimism-free mechanism for equilibrium learning in $\alpha$-potential contextual bandits. To this end, we propose the \textit{Reference-Anchored} coverage framework, a novel formulation of unilateral concentrability that resolves the conceptual circularity of prior work by anchoring data requirements to a reference policy rather than an unknown optimum. We prove that this regularized framework not only stabilizes learning but also leverages the inherent geometry of potential games to achieve superior statistical efficiency. Specifically, we show that decentralized regularized learning can achieve fast $\tilde{\mathcal{O}}(1/n)$ statistical rates, bypassing the standard $\tilde{\mathcal{O}}(1/\sqrt{n})$ bottleneck found in generic multi-agent learning \citep{cui2022provably, zhang2023offline}.

Our analysis centers on a unique geometric discovery: the KL-regularized Mirror Descent update forms an exact log-linear interpolation between the current iterate and the regularized best response. This structural identity allows us to recast the evaluation mismatch between best responses and the learning iterates as a vanishing optimization error; in contrast, traditional analyses are forced to treat this term as a fixed statistical bottleneck that prevents acceleration beyond the standard $\tilde{\mathcal{O}}(1/\sqrt{n})$ rate \citep{cui2022provably, zhong2022pessimistic}. Consequently, we establish that first-order estimation errors cancel out within the decentralized product structure of potential-game iterates,  enabling the accelerated $\tilde{\mathcal{O}}(1/n)$ rate. Our key contributions are summarized as follows:

\begin{itemize}
    \item \textbf{Reference-Anchored Coverage}: We introduce a novel unilateral concentrability condition that anchors data requirements to a known reference policy rather than an unknown optimum. This formulation provides a tractable framework for offline MARL, resolving the conceptual circularity of prior work.
    
    \item \textbf{Theoretical Fast-Rate Framework}: We propose the Regularized Offline Potential Equilibrium (ROPE) algorithm to establish that a fast $\tilde{\mathcal{O}}(1/n)$ statistical rate is achievable in regularized $\alpha$-potential games. Our analysis demonstrates that the structural alignment of the potential function enables the cancellation of first-order errors, bypassing the $\tilde{\mathcal{O}}(1/\sqrt{n})$ bottleneck typical of generic general-sum games.
    
    \item \textbf{Decentralized Algorithmic Recovery}: We develop Offline Potential Mirror Descent (OPMD), a computationally tractable algorithm that implements the regularized equilibrium learning process through iterative self-play. We prove that OPMD preserves the $\tilde{\mathcal{O}}(1/n)$ rate (up to the structural bias $\alpha$) by exploiting the log-linear geometry of the Mirror Descent updates to resolve linear evaluation mismatch as a vanishing optimization error.
\end{itemize}



\section{Related Work}
\label{sec:related_work}

\paragraph{Potential Games.}
The study of dynamic potential games generalizes the static framework to state-dependent environments, ensuring the guaranteed existence of pure Nash Equilibria and the global convergence of decentralized gradient play \citep{leonardos2021global, yan2025markov}. To accommodate the diverse structures of real-world strategic interactions, recent theoretical advancements have introduced the $\alpha$-potential framework, which characterizes systems where unilateral deviations track potential changes up to a structural bias $\alpha \ge 0$ \citep{guo2025potential, li2025alpha}. Within this landscape, \citet{guo2025potential} established that the $\alpha$-potential structure is universal, proving that any finite-state dynamic game can be characterized as an $\alpha$-potential game. This characterization has been extended to stochastic differential games and constrained settings, with explicit characterizations of $\alpha$ in terms of player heterogeneity and interaction intensity \citep{guo2025potential, das2024learning}. Furthermore, \citet{yan2025markov} identified sufficient conditions on reward design and transition dynamics that allow for the systematic construction of potential games, with specific applications to autonomous driving and traffic management.

While the $\alpha$-potential framework reduces equilibrium search to a single-objective optimization problem, existing literature primarily focuses on the \textit{planning} regime—where the optimizer has access to the potential landscape or best-response oracles \citep{guo2023markov, das2024learning}—or the \textit{online learning} regime. In the latter, \citet{ding2022independent} proposed independent policy gradient (IPG) algorithms for large-scale games, establishing iteration and sample complexity bounds that remain independent of the state-space size. Similarly, \citet{li2025alpha} provided a comprehensive analysis of RL algorithms in $\alpha$-potential games across both discrete and continuous time. However, these results are derived for the online setting, where agents benefit from active exploration and immediate environmental feedback. Despite the computational tractability afforded by the potential structure, the statistical properties of equilibrium learning in the \textit{offline} regime—specifically under general function approximation and distribution shift—remain largely unexplored. Our work bridges this gap by demonstrating that the structural alignment of $\alpha$-potential games can be exploited to achieve accelerated statistical rates in the offline setting without the need for explicit pessimistic mechanisms.

\paragraph{General-Sum Games.}
The transition from single-agent reinforcement learning to multi-agent environments necessitates a shift from identifying value-optimal policies to recovering stable equilibria \citep{roughgarden2016twenty, cui2022offline}. While single-agent objectives seek to maximize a unified return, multi-agent dynamics are characterized by individual strategic incentives, typically modeled through Coarse Correlated Equilibria (CCE) or Nash Equilibria (NE) \citep{daskalakis2009complexity, cui2022provably}. Recovering a Nash Equilibrium is fundamentally challenging due to its PPAD-complete complexity in general-sum settings, which has historically led many methodologies to focus on the more computationally tractable CCE, where stability is established via no-regret learning or correlated strategies \citep{chen2009settling, yan2024model}. However, the pursuit of Nash stability remains essential for decentralized applications requiring robust independent decision-making, where coordinated or correlated mechanisms are infeasible or strategically undesirable \citep{song2021can, cui2022provably}.

In the offline regime, the difficulty of equilibrium learning is compounded by the lack of interactive exploration and the presence of severe distribution shift \citep{levine2020offline, jin2021pessimism}. Standard methodologies typically address these gaps by adapting the principle of pessimism to the game-theoretic setting, utilizing uncertainty bonuses or lower confidence bounds (LCB) to penalize values in poorly explored joint-action regions \citep{zhong2022pessimistic, cui2022provably, zhang2023offline}. Despite their theoretical robustness, these pessimistic frameworks suffer from two primary bottlenecks. First, many existing approaches rely on complex uncertainty quantifiers over joint action spaces, leading to significant computational overhead unless restricted by specific unilateral concentrability conditions \citep{cui2022provably, yan2024model}. Second, current algorithms that target Nash Equilibria often require centralized coordination or heavy iterative value updates that lack decentralized implementations capable of achieving accelerated statistical rates \citep{zhong2022pessimistic, zhang2023offline}. Our work addresses these challenges by demonstrating that regularized decentralized learning can successfully recover Nash Equilibria at fast statistical rates without the need for explicit pessimistic bonuses, bypassing the computational and complexity barriers characteristic of traditional general-sum learning.

\section{Preliminaries}
\label{sec:preliminaries}

We consider an $m$-player contextual bandit game possessing an $\alpha$-potential structure. In this section, we formalize the environment, the regularized objectives, and the potential-based equilibrium concepts analyzed in this work.

\subsection{$m$-Player Contextual Bandits}
The game is defined by the tuple $\mathcal{M} = (\mathcal{N}, \mathcal{X}, \{\mathcal{A}_i\}_{i=1}^m, \{r_i\}_{i=1}^m, \rho)$, where $\mathcal{N} = [m]$ is the set of players, $\mathcal{X}$ is the context space, and $\mathcal{A}_i$ is the finite action space for player $i$. In each round, a context $x \sim \rho$ is sampled. Players simultaneously select actions $a_i \in \mathcal{A}_i$, forming a joint action profile $\boldsymbol{a} = (a_1, \dots, a_m) \in \mathcal{A} \coloneqq \prod_{i=1}^m \mathcal{A}_i$. Each player $i$ receives a reward $r_i(x, \boldsymbol{a}) \in [0, 1]$. A joint policy $\pi: \mathcal{X} \to \Delta(\mathcal{A})$ maps contexts to distributions over joint actions. In this work, we primarily focus on product policies, where $\pi(\boldsymbol{a} \mid x) = \prod_{i=1}^m \pi_i(a_i \mid x)$.

\subsection{KL-Regularized Objectives}

KL regularization relative to a fixed reference policy has emerged as a fundamental mechanism for enforcing behavioral constraints and incorporating prior knowledge in multi-agent systems. In many practical scenarios, such as the alignment of large language models through Reinforcement Learning from Human Feedback, KL regularization is a functional requirement to prevent policy drift and ensure the model remains anchored to a trusted reference distribution \citep{munos2024nash}. Within the offline setting, this regularization provides a dual benefit: it acts as a stabilizer against distribution shift and serves as an implicit pessimism mechanism that replaces the need for explicit, manually tuned pessimistic bonuses.

For a fixed reference joint policy $\pi^{\mathrm{ref}} = (\pi_1^{\mathrm{ref}}, \dots, \pi_m^{\mathrm{ref}})$ and a regularization coefficient $\eta > 0$, the regularized value function $V_i^\pi$ for player $i$ is defined as:
\begin{equation}
    V_i^{\pi}(x) \coloneqq \mathbb{E}_{\boldsymbol{a} \sim \pi(\cdot \mid x)} [r_i(x, \boldsymbol{a})] - \eta^{-1} \mathrm{KL}\big(\pi_i(\cdot \mid x) \,\|\, \pi_i^{\mathrm{ref}}(\cdot \mid x)\big).
    \label{eq:value_kl_def}
\end{equation}
 We denote the global regularized return as $J_i(\pi) \coloneqq \mathbb{E}_{x \sim \rho} [V_i^\pi(x)]$. Each player $i$ seeks to maximize their own regularized return given the strategies of their opponents $\pi_{-i}$. The corresponding regularized best-response value is defined as $V_i^{\dagger, \pi_{-i}}(x) \coloneqq \max_{\pi_i'} V_i^{\pi_i', \pi_{-i}}(x)$. 

\subsection{The $\alpha$-Potential Structure}
We leverage the framework of $\alpha$-potential games to characterize the strategic interactions \citep{guo2025markov, li2025alpha}. A game is an $\alpha$-potential game if there exists a global potential function $\Phi: \Pi \to \mathbb{R}$ such that for any player $i$, any joint policy $\pi$, and any unilateral deviation $\pi_i'$, the change in the individual return is tracked by the change in the potential function up to a structural bias $\alpha \ge 0$:
\begin{equation}
    \left| \left( J_i(\pi_i', \pi_{-i}) - J_i(\pi_i, \pi_{-i}) \right) - \left( \Phi(\pi_i', \pi_{-i}) - \Phi(\pi_i, \pi_{-i}) \right) \right| \le \alpha.
\end{equation}
This framework is universal, as any finite non-cooperative game can be characterized as an $\alpha$-potential game for a sufficiently large $\alpha$ \citep{li2025alpha, guo2025potential}. In this work, we exploit this structure to reduce the search for equilibria to a potential maximization problem.

\subsection{$\alpha$-Nash Equilibrium and Total Exploitability}

Our goal is to recover a joint policy $\pi^*$ that constitutes an $\alpha$-Nash Equilibrium ($\alpha$-NE). Conceptually, an $\alpha$-NE represents a state of near-stability where no player can unilaterally improve their regularized return by more than the structural bias $\alpha$:
\begin{equation}
    J_i(\pi_i^*, \pi_{-i}^*) \ge \sup_{\pi_i'} J_i(\pi_i', \pi_{-i}^*) - \alpha, \quad \forall i \in [m]. 
\end{equation}
For a learned policy $\hat{\pi}$, we evaluate its performance using the NE Total Exploitability, or Nash Gap, which aggregates individual suboptimality to provide a scalar measure of the policy's distance from equilibrium:
\begin{equation}
    \mathrm{Gap}_{\mathrm{NE}}(\hat{\pi}) \coloneqq \sum_{i=1}^m \mathbb{E}_{x \sim \rho} \left[ V_i^{\dagger, \hat{\pi}_{-i}}(x) - V_i^{\hat{\pi}}(x) \right]. 
\end{equation}
In this formulation, $V_i^{\dagger, \hat{\pi}_{-i}}(x) - V_i^{\hat{\pi}}(x)$ represents the maximum unilateral improvement available to player $i$ given the fixed strategies of their opponents.

The $\alpha$-potential structure facilitates a reduction of this equilibrium search to a single-objective optimization problem \citep{guo2023markov}. Formally, let $\hat{\pi}$ denote an $\epsilon$-approximate maximizer of the global potential function $\Phi$, satisfying:
\begin{equation}
    \Phi(\hat{\pi}) \ge \sup_{\pi \in \Pi} \Phi(\pi) - \epsilon 
\end{equation}
for some optimization error $\epsilon \ge 0$. A core result in the characterization of $\alpha$-potential games is that any such policy $\hat{\pi}$ is guaranteed to be an $(\epsilon + 2\alpha)$-Nash Equilibrium of the underlying regularized game \citep{guo2025markov}. This structural alignment establishes potential optimization as a computationally tractable proxy for equilibrium learning, thereby bypassing the PPAD-complete complexity characteristic of generic general-sum interactions.

\subsection{Offline Learning Model}
The learner has access to an offline dataset $\mathcal{D} = \{ (x_{\tau}, \boldsymbol{a}_{\tau}, \mathbf{r}_{\tau}) \}_{\tau=1}^{n}$ generated by an unknown behavioral distribution $\mu(x, \boldsymbol{a})$ with $\mathbf{r}_{\tau} = (r_{\tau,1}, \dots, r_{\tau,m})$. We assume access to function classes $\{\mathcal{Q}_i\}_{i=1}^m$ for reward estimation, where each class $\mathcal{Q}_i$ consists of functions mapping $\mathcal{X} \times \mathcal{A} \to [0, 1]$. Following standard literature \citep{xie2021batch, zhang2023offline, liu2025ode}, we assume realizability, i.e., $r_i \in \mathcal{Q}_i$. For any estimate $\hat{Q}_i \in \mathcal{Q}_i$, the pointwise regression error is defined as $\mathcal{Z}_i(x, \boldsymbol{a}) \coloneqq \hat{Q}_i(x, \boldsymbol{a}) - r_i(x, \boldsymbol{a})$.


\section{Data Coverage and Concentrability}
\label{sec:coverage}

A fundamental challenge in offline reinforcement learning is the distribution shift between the logged dataset $\mathcal{D}$ and the target policy \citep{levine2020offline, cui2022provably, zhang2023offline}. To establish sample-efficient recovery, one must assume the offline data provides sufficient coverage of the state-action regions relevant to the target objective. In this section, we review the landscape of concentrability assumptions in multi-agent systems and formalize our novel reference-anchored approach.
\subsection{From Single-Agent to Joint Concentrability}

In single-agent offline RL, sample efficiency is typically established under the assumption that the data distribution $\mu$ covers the optimal policy $\pi^*$ \citep{levine2020offline}. 

\begin{assumption}[Single-Agent Concentrability]
There exists a constant $C^* \ge 1$ such that 
\begin{align}
    \left\| \frac{\rho(x)\pi^*(a \mid x)}{\mu(x, a)} \right\|_\infty \le C^*.
\end{align}
\end{assumption}

However, in multi-agent games, simply covering a single optimal profile is insufficient for learning a stable equilibrium \citep{cui2022offline}. The most straightforward extension is \textbf{joint concentrability}, which demands that the dataset support all possible simultaneous deviations by the agents.

\begin{assumption}[Joint Multi-Agent Concentrability]
There exists a constant $C_{\mathrm{joint}} \ge 1$ such that for any joint policy $\pi = \prod_{i=1}^m \pi_i$:
\begin{align}
    \left\| \frac{\rho(x)\pi(\boldsymbol{a} \mid x)}{\mu(x, \boldsymbol{a})} \right\|_\infty \le C_{\mathrm{joint}}.
\end{align}
\end{assumption}

Here, the vector $\boldsymbol{a}$ denotes the joint action profile of all $m$ players. This requirement scales with the \textit{product }of action spaces ($\prod_{i=1}^m |\mathcal{A}_i|$), resulting in an \textit{exponential }growth in complexity known as the ``curse of multiagents'' \citep{cui2022provably}. This scaling makes the assumption practically prohibitive for environments with many participants or a large action space. In contrast, \citet{cui2022provably} demonstrate that a strategy-wise approach can achieve linear scaling ($\sum_{i=1}^m |\mathcal{A}_i|$), effectively breaking this curse by reducing the complexity to the sum of individual action spaces.

\subsection{Unilateral Concentrability and the Circularity Problem}

To bypass the exponential scaling of joint action spaces, recent works have adopted \textbf{unilateral concentrability} \citep{cui2022provably, cui2022offline, zhang2023offline}. This principle requires that the dataset only needs to support scenarios where a single agent deviates from a target equilibrium strategy $\pi^*$ while all others remain stationary.

\begin{assumption}[Equilibrium-Anchored Unilateral Concentrability, \citet{cui2022provably}]
Let $\pi^* = (\pi_1^*, \dots, \pi_m^*)$ be a target equilibrium policy. Let $\Pi_{\mathrm{uni}}(\pi^*) \coloneqq \bigcup_{i=1}^m \left( \Pi_i \times \{\pi_{-i}^*\} \right)$ be the set of unilateral deviations from $\pi^*$. There exists a constant $C_{\mathrm{uni}}^* \ge 1$ such that for any $\pi' \in \Pi_{\mathrm{uni}}(\pi^*)$:
\begin{align}
    \left\| \frac{\rho(x)\pi'(\boldsymbol{a} \mid x)}{\mu(x, \boldsymbol{a})} \right\|_\infty \le C_{\mathrm{uni}}^*.
\end{align}
\end{assumption}

\citet{cui2022provably} demonstrated that unilateral concentration is a necessary assumption for learning a Nash Equilibrium (NE) in offline games. Crucially, \citet{cui2022provably} demonstrate that this strategy-wise approach allows the sample complexity to scale \textit{linearly }with the \textit{sum }of action spaces ($\sum_{i=1}^m |\mathcal{A}_i|$), effectively breaking the curse of multiagents.

\paragraph{The Curse of the Unknown Optimum.} Despite its improved scaling, classical unilateral concentrability is anchored to an unknown optimal equilibrium policy $\pi^*$ \citep{cui2022offline, zhang2023offline}. This introduces a fundamental conceptual circularity: to define the data sufficiency required to find the equilibrium, one must already know the equilibrium itself. In general-sum settings, where equilibria are potentially non-unique and finding one is PPAD-complete \citep{daskalakis2009complexity}, this ambiguity makes the assumption theoretically ill-posed and practically unverifiable.

\subsection{Proposed: Reference-Anchored Unilateral Concentrability}

By utilizing KL regularization, we solve the circularity problem by anchoring the coverage requirement to a \textit{fixed and known} reference policy $\pi^{\mathrm{ref}}$. This shifts the theoretical burden from an elusive optimal strategy to a concrete anchor.

\begin{assumption}[Reference-Anchored Unilateral Concentrability]
\label{ass:concentrability}
Let $\Pi = \prod_{i=1}^m \Pi_i$ be the class of joint policies, and let $\pi^{\mathrm{ref}} = (\pi_1^{\mathrm{ref}}, \dots, \pi_m^{\mathrm{ref}})$ be the fixed reference policy. We define the set of \emph{reference-anchored unilateral deviation policies} $\Pi_{\mathrm{ref-uni}} \subset \Pi$ as the union of policies where exactly one player deviates to an arbitrary policy while all other players adhere to the reference policy: 
\begin{align}
    \Pi_{\mathrm{ref-uni}} \coloneqq \bigcup_{i=1}^m \left( \Pi_i \times \{\pi_{-i}^{\mathrm{ref}}\} \right).
\end{align}
Let $\mu \in \Delta(\mathcal{X} \times \mathcal{A})$ denote the underlying joint context-action distribution of the offline dataset $\mathcal{D}$. We assume there exists a constant $C_{\mathrm{uni}} \ge 1$ such that for any $\pi' \in \Pi_{\mathrm{ref-uni}}$, the density ratio is uniformly bounded: 
\begin{align}
    \left\| \frac{\rho(x)\pi'(\boldsymbol{a} \mid x)}{\mu(x, \boldsymbol{a})} \right\|_\infty \le C_{\mathrm{uni}}.
\end{align}
\end{assumption}

\paragraph{Advantages of Reference-Anchored Coverage.} 
Assumption~\ref{ass:concentrability} offers two critical advantages over prior formulations:
\begin{enumerate}
    \item \textbf{Verifiability}: Unlike $\pi^*$, the reference policy $\pi^{\mathrm{ref}}$ is known to the learner. It can be chosen as the behavior policy that generated the data or a previously validated iteration, allowing the coverage coefficient $C_{\mathrm{uni}}$ to be empirically estimated.
    \item \textbf{Tractability}: By focusing on unilateral deviations, we maintain the $\sum_{i=1}^m |\mathcal{A}_i|$ scaling derived in \citet{cui2022provably}, effectively avoiding the exponential joint action space while providing a more robust stabilizer through KL anchoring.
\end{enumerate}

We further denote $C_{\mathrm{shift}} \coloneqq \exp(\eta(m-1))$ to quantify the density ratio shift between a learned policy and the reference policy across $m-1$ opponents. Combined with $C_{\mathrm{uni}}$, it characterizes the total statistical overhead for reference-anchored offline equilibrium learning.

\section{Regularized Offline Potential Equilibrium}
\label{sec:rope_algorithm}

In this section, we establish the Regularized Offline Potential Equilibrium (ROPE) framework, which identifies $\alpha$-Nash Equilibria in $\alpha$-potential contextual bandits by solving a regularized empirical game. By incorporating KL regularization relative to a known reference policy, ROPE provides a mechanism for stable equilibrium recovery from offline data without relying on the manually tuned, explicit pessimistic bonuses often required in standard methodologies \citep{jin2021pessimism, cui2022provably}.

The fundamental premise of the ROPE framework is to leverage the structural alignment between individual player utilities and the global potential function \citep{li2025alpha, guo2025markov}. In $\alpha$-potential games, the search for a strategic equilibrium is computationally equivalent to a potential maximization task, where any approximate maximizer of the potential corresponds to an $\alpha$-Nash Equilibrium \citep{guo2025markov}. 

\subsection{The ROPE Framework}

The ROPE algorithm (Algorithm~\ref{alg:ideal_rope}) operates through a two-stage process: statistical estimation followed by strategic computation. First, the learner utilizes the offline dataset $\mathcal{D}$ to construct empirical estimates of the reward functions for each player via least-squares regression. These estimates serve as the foundation for the empirical game. Second, the algorithm computes a joint policy that constitutes an exact regularized Nash Equilibrium for this estimated environment. 

\begin{algorithm}[H]
\caption{Regularized Offline Potential Equilibrium (ROPE)}
\label{alg:ideal_rope}
\begin{algorithmic}[1]
    \State {\bfseries Input:} Offline dataset $\mathcal{D}$, reference policies $\pi^{\mathrm{ref}}$, regularization parameter $\eta > 0$, function classes $\{\mathcal{Q}_i\}_{i=1}^m$
    \For{player $i = 1, \dots, m$}
        \State {\bfseries Empirical Estimation:} Estimate the Q-function via least-squares regression:
        \State $\quad \hat{Q}_i \leftarrow \argmin_{f \in \mathcal{Q}_i} \sum_{\tau=1}^n \left( f(x_\tau, \boldsymbol{a}_\tau) - r_{\tau,i} \right)^2$
    \EndFor
    
    \State {\bfseries Equilibrium Computation:} For each context $x \in \mathcal{X}$, compute a joint product policy 
    $\hat{\pi}(\cdot|x) = \prod_{i=1}^m \hat{\pi}_i(\cdot|x)$ 
    satisfying the exact regularized best-response condition for all $i \in [m]$:
    \State $\quad \hat{\pi}_i(\cdot|x) \in \argmax_{\pi_i \in \Delta(\mathcal{A}_i)} \left\{ \mathbb{E}_{a_i \sim \pi_i, \boldsymbol{a}_{-i} \sim \hat{\pi}_{-i}} [ \hat{Q}_i(x, \boldsymbol{a}) ] - \eta^{-1} \mathrm{KL}(\pi_i \| \pi_i^{\mathrm{ref}}) \right\}$
    
    \For{player $i = 1, \dots, m$}
        \State {\bfseries Value Update:} Update the empirical value function for Player $i$:
        \State $\quad \hat{V}_i(x) \leftarrow \mathbb{E}_{\boldsymbol{a} \sim \hat{\pi}}[ \hat{Q}_i(x, \boldsymbol{a}) ] - \eta^{-1}\mathrm{KL}(\hat{\pi}_i \| \pi_i^{\mathrm{ref}})$
    \EndFor
    \State {\bfseries Output:} Learned joint policy $\hat{\pi}$
\end{algorithmic}
\end{algorithm}

\paragraph{Algorithmic Breakdown.} 
The estimation phase in ROPE (Lines 2–5) relies on standard empirical risk minimization, where the complexity of the function classes $\{\mathcal{Q}_i\}_{i=1}^m$ and the dataset size $n$ govern the precision of the reward approximations \citep{cui2022provably, zhang2023offline}. The equilibrium computation phase (Lines 6–7) requires identifying a joint product policy $\hat{\pi}$ where every player's strategy is a best response to the profile of their opponents within the empirical game. 

Crucially, the $\alpha$-potential structure ensures that this fixed-point condition aligns with the global maximization of the latent potential function $\Phi$ \citep{li2025alpha}. In generic general-sum games, identifying such an equilibrium is generally PPAD-complete \citep{chen2009settling}. However, the structural symmetry of the potential game, combined with the smoothing effect of the KL regularizer, ensures that this objective is well-posed and serves as a reliable proxy for recovery in the true environment.

\subsection{Statistical Guarantees}

By leveraging the Reference-Anchored Unilateral Concentrability framework (Assumption~\ref{ass:concentrability}), we establish that the equilibrium identified by ROPE achieves superior statistical efficiency. The following theorem demonstrates that anchoring to a known reference policy allows the learner to exceed the standard statistical rates of generic multi-agent systems.

\begin{theorem}
\label{thm:fast_rate_rope}
Under Assumption~\ref{ass:concentrability}, and assuming the reward functions are realizable within $\{\mathcal{Q}_i\}_{i=1}^m$, the estimated regularized joint policy $\hat{\pi}$ returned by ROPE achieves, with probability at least $1-\delta$, a Total Exploitability bounded by:
\begin{align}
    \mathrm{Gap}_{\mathrm{NE}}(\hat{\pi}) \le \widetilde{\mathcal{O}}\left( \frac{m \eta C_{\mathrm{shift}} C_{\mathrm{uni}} \log (|\mathcal{Q}|/\delta)}{n} \right).
\end{align}
\end{theorem}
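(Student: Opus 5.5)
The proof works player by player and context by context, using the Gibbs form of KL-regularized best responses. Fix $i$ and $x$. For any opponent profile $\pi_{-i}$, the regularized best response to a reward $f$ is $\pi_i^{f}(a_i\mid x)\propto \pi_i^{\mathrm{ref}}(a_i\mid x)\exp\bigl(\eta\,\mathbb{E}_{\boldsymbol a_{-i}\sim\pi_{-i}}f(x,a_i,\boldsymbol a_{-i})\bigr)$. The optimal value is $\eta^{-1}\log\mathbb{E}_{a_i\sim\pi_i^{\mathrm{ref}}}\exp(\eta\,\bar f_i(x,a_i))$, where $\bar f_i(x,a_i)=\mathbb{E}_{\boldsymbol a_{-i}\sim\hat\pi_{-i}}f(x,a_i,\boldsymbol a_{-i})$. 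By the exact best-response condition in ROPE, $\hat\pi_i$ is the Gibbs policy for $\bar{\hat Q}_i$. The true best response $\pi_i^\dagger$ is the Gibbs policy for $\bar r_i$, with the same $\hat\pi_{-i}$.

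The first step is to write the per-context gap $V_i^{\dagger,\hat\pi_{-i}}(x)-V_i^{\hat\pi}(x)$ as a function of $\Delta_i(a_i):=\bar{\hat Q}_i(x,a_i)-\bar r_i(x,a_i)=\mathbb{E}_{\hat\pi_{-i}}\mathcal Z_i$. Since $V_i^{\hat\pi}$ is the true regularized objective evaluated at $\hat\pi_i$, the gap equals $\eta^{-1}\mathrm{KL}(\hat\pi_i\|\pi_i^\dagger)$. This follows from the identity $L(\pi^\dagger)-L(\pi)=\eta^{-1}\mathrm{KL}(\pi\|\pi^\dagger)$ for the strongly concave objective $L(\pi)=\langle \pi,\bar r_i\rangle-\eta^{-1}\mathrm{KL}(\pi\|\pi^{\mathrm{ref}})$. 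The two policies are Gibbs tilts of the same base with potentials differing by $\eta\Delta_i$. A second-order expansion of the log-partition function (the KL between exponential-family members equals a Bregman divergence of the log-partition) therefore gives
\begin{align}
\eta^{-1}\mathrm{KL}(\hat\pi_i\|\pi_i^\dagger)\le \tfrac{\eta}{2}\,\sup_{\tilde\pi}\mathrm{Var}_{a_i\sim\tilde\pi}(\Delta_i)\le \tfrac{\eta}{2}\,\mathbb{E}_{a_i\sim\tilde\pi_i}\bigl[\Delta_i(a_i)^2\bigr],
\end{align}
where $\tilde\pi_i$ is an intermediate Gibbs policy on the segment between the two potentials. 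This is the key point where the first-order term vanishes: the linear term cancels because $\hat\pi_i$ is the exact maximizer of the empirical objective. What remains is quadratic in the regression error.

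The second step is the change of measure. By Jensen, $\Delta_i^2\le \mathbb{E}_{\boldsymbol a_{-i}\sim\hat\pi_{-i}}\mathcal Z_i^2$, so the gap is at most $\frac{\eta}{2}\mathbb{E}_{x\sim\rho}\mathbb{E}_{\boldsymbol a\sim\tilde\pi_i\times\hat\pi_{-i}}\mathcal Z_i(x,\boldsymbol a)^2$. Each $\hat\pi_j$, and likewise each intermediate policy, is a Gibbs tilt of $\pi_j^{\mathrm{ref}}$ by a potential in $[0,\eta]$, since rewards and $\hat Q_j$ lie in $[0,1]$. Hence $\hat\pi_j/\pi_j^{\mathrm{ref}}\le e^{\eta}$, and over the $m-1$ opponents $\hat\pi_{-i}/\pi_{-i}^{\mathrm{ref}}\le e^{\eta(m-1)}=C_{\mathrm{shift}}$. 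The policy $\tilde\pi_i\times\pi^{\mathrm{ref}}_{-i}$ belongs to $\Pi_{\mathrm{ref-uni}}$, so Assumption~\ref{ass:concentrability} bounds the whole quantity by $\frac{\eta}{2}C_{\mathrm{shift}}C_{\mathrm{uni}}\,\mathbb{E}_{\mu}[\mathcal Z_i^2]$.

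The final step is the standard least-squares generalization bound under realizability with a finite class. With probability $1-\delta/m$, $\mathbb{E}_\mu[(\hat Q_i-r_i)^2]\lesssim \log(m|\mathcal Q|/\delta)/n$; this uses a Bernstein-plus-union-bound argument on the excess squared loss, whose variance is controlled by its mean. A union bound over $i$ and a sum over the $m$ players then gives the stated $\widetilde{\mathcal O}(m\eta C_{\mathrm{shift}}C_{\mathrm{uni}}\log(|\mathcal Q|/\delta)/n)$. I expect the main obstacle to be the first step: stating the second-order bound cleanly while justifying that the intermediate measure $\tilde\pi_i$ is itself covered. Covering it uniformly through $\Pi_{\mathrm{ref-uni}}$, which allows an arbitrary deviation by player $i$, is what makes this work. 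Existence of the exact equilibrium $\hat\pi$ I take as given, from the $\alpha$-potential structure and the algorithm's specification.
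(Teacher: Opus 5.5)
Your proposal is correct and follows essentially the paper's argument. Your identity $V_i^{\dagger,\hat\pi_{-i}}(x)-V_i^{\hat\pi}(x)=\eta^{-1}\mathrm{KL}(\hat\pi_i\|\pi_i^\dagger)$ is exactly the log-partition Bregman divergence the paper obtains as $\delta_{\mathrm{BR},i}+\delta_{\mathrm{Iter},i}$ once the linear $\mathcal{Z}_i$ terms cancel and $\varepsilon_{\mathrm{pot},i}=0$, and the remaining steps (Jensen, the $e^{\eta(m-1)}$ opponent shift, Assumption~\ref{ass:concentrability}, the least-squares fast rate) match; the only minor difference is that you bound the second-order remainder by the variance under an intermediate Gibbs policy and use that policy as the covered deviation, whereas the paper uses $1$-smoothness of the log-partition in $\|\cdot\|_\infty$ and the point-mass deviation $\delta_{a_i^\star(x)}$, with the same final rate.
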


The proof of Theorem~\ref{thm:fast_rate_rope} is provided in Appendix~\ref{app:close-rope}. The accelerated $\widetilde{\mathcal{O}}(1/n)$ statistical rate is a consequence of the structural alignment between the learned equilibrium $\hat{\pi}$ and the evaluation trajectories of the best-response values. Specifically, within the per-iterate decomposition $V_i^{\dagger,\hat{\pi}_{-i}} - V_i^{\hat{\pi}} = \varepsilon_{\mathrm{pot},i} + \delta_{\mathrm{BR},i} + \delta_{\mathrm{Iter},i}$ (Equation~\eqref{eq:gap_decomp_named} in Appendix~\ref{app:proofs}), the linear parts of the evaluation bias $\delta_{\mathrm{BR},i}$ and the iterate noise $\delta_{\mathrm{Iter},i}$ expand into expectations of the regression error $\mathcal{Z}_i$ with opposite signs. Because $\hat{\pi}$ is a product policy, these evaluation distributions match perfectly, the linear $\mathcal{Z}_i$ terms cancel exactly, and the potential ascent error $\varepsilon_{\mathrm{pot},i}$ vanishes by exact-NE optimality. By the Bregman bound on the regularized best-response gap (Lemma~\ref{lem:bregman_bestresponse}, derived from Pinsker's inequality and Fenchel duality), the remaining exploitability gap is dominated by a second-order squared residual. Under the unilateral data coverage guaranteed by Assumption~\ref{ass:concentrability}, this residual scales with the in-sample squared regression error, which decays at the fast $\widetilde{\mathcal{O}}(1/n)$ rate. This result demonstrates that for environments with potential-aligned structures, solving the regularized empirical game is sufficient to bypass the standard $\widetilde{\mathcal{O}}(1/\sqrt{n})$ statistical bottleneck characteristic of generic offline learning.

\section{Offline Potential Mirror Descent}
\label{sec:practical_algorithm}

While the ROPE framework (Algorithm~\ref{alg:ideal_rope}) establishes that maximizing the latent potential function yields an $\alpha$-Nash Equilibrium, it assumes access to an exact optimization oracle. In high-dimensional environments, explicitly constructing and optimizing the global potential $\Phi$ is often computationally prohibitive. To bridge this gap, we leverage the fact that in $\alpha$-potential games, individual improvements naturally translate into global progress \citep{guo2023markov,guo2025markov}. The proposed Algorithm~\ref{alg:sos_md} (OPMD) exploits this structure; it allows players to maximize their local utilities independently, serving as a practical, decentralized approach for maximizing the global potential.


\subsection{The OPMD Algorithm}

The OPMD algorithm utilizes decentralized self-play to recover strategic equilibria without the need for a global solver. The process begins by estimating individual Q-functions via least-squares regression over the offline dataset $\mathcal{D}$. Following estimation, players enter an iterative inner loop where they perform local KL-regularized Mirror Descent updates based on the marginalized actions of their opponents. 


Upon completion, the algorithm returns an iterate $\tilde{\pi} = \pi^{(t^*)}$ from the generated sequence. Because the game possesses an $\alpha$-potential structure, the gradient of the global potential function $\Phi$ is composed of the individual player gradients. Consequently, when players independently update their strategies via Mirror Descent, the joint policy trajectory effectively performs decentralized ascent on the potential landscape. This structural alignment allows the algorithm to resolve the linear evaluation mismatch in $\delta_{\mathrm{BR},i}^{(t)}+\delta_{\mathrm{Iter},i}^{(t)}$ between iterates and best responses as a vanishing optimization error, rather than a fixed statistical bottleneck, attaining the accelerated $\widetilde{\mathcal{O}}(1/n)$ statistical rate characterized in Theorem~\ref{thm:opmd_rate}.


\begin{algorithm}[H]
\caption{Offline Potential Mirror Descent (OPMD)}
\label{alg:sos_md}
\begin{algorithmic}[1]
    \State {\bfseries Input:} Offline dataset $\mathcal{D}$, reference policies $\pi^{\mathrm{ref}}$, regularization $\eta > 0$, function classes $\{\mathcal{Q}_i\}_{i=1}^m$, step size $\gamma > 0$, iterations $T$
    \For{player $i = 1, \dots, m$}
        \State {\bfseries Empirical Estimation:} Estimate Q-functions via least-squares:
        \State $\quad \hat{Q}_i \leftarrow \argmin_{f \in \mathcal{Q}_i} \sum_{\tau=1}^n \left( f(x_\tau, \boldsymbol{a}_\tau) - r_{\tau,i} \right)^2$
        \State Initialize base policy: $\pi_i^{(1)}(\cdot|x) = \pi_i^{\mathrm{ref}}(\cdot|x)$ for all $x \in \mathcal{X}$
    \EndFor
    
    \State {\bfseries Decentralized Self-Play:}
    \For{$t = 1, \dots, T$}
        \For{player $i = 1, \dots, m$}
            \State {\bfseries Marginalized Value:} Evaluate expected Q-value against opponents' current policies:
            \State $\quad \bar{Q}_i^{(t)}(x, a_i) = \mathbb{E}_{\boldsymbol{a}_{-i} \sim \pi_{-i}^{(t)}(\cdot|x)} \big[ \hat{Q}_i(x, a_i, \boldsymbol{a}_{-i}) \big]$
            \State {\bfseries Mirror Descent Update:} Perform local KL-regularized policy improvement:
            \State $\quad \pi_i^{(t+1)}(a_i|x) \propto \left( \pi_i^{\mathrm{ref}}(a_i|x) \right)^{\frac{\gamma}{\eta + \gamma}} \left( \pi_i^{(t)}(a_i|x) \right)^{\frac{\eta}{\eta + \gamma}} \exp\left( \frac{\eta \gamma}{\eta + \gamma} \bar{Q}_i^{(t)}(x, a_i) \right)$
        \EndFor
    \EndFor
    
    \State {\bfseries Output Selection:} Sample an index $t^* \sim \text{Uniform}(1, \dots, T)$
    \State {\bfseries Output:} The product joint policy $\tilde{\pi} = \pi^{(t^*)}$
\end{algorithmic}
\end{algorithm}

\subsection{Statistical Guarantees}

By exploiting the inherent geometry of the Mirror Descent iterates, OPMD (Algorithm~\ref{alg:sos_md}) achieves the fast statistical rate established for the ROPE framework, subject to the structural bias of the game. The expected suboptimality decomposes via $V_i^{\dagger,\pi_{-i}^{(t)}}-V_i^{\pi^{(t)}} = \varepsilon_{\mathrm{pot},i}^{(t)}+\delta_{\mathrm{BR},i}^{(t)}+\delta_{\mathrm{Iter},i}^{(t)}$ (potential-ascent residual, evaluation bias, and iterate noise; see Equation~\eqref{eq:gap_decomp_named}, Appendix~\ref{app:proofs}).

\begin{theorem}
\label{thm:opmd_rate}
Under Assumption~\ref{ass:concentrability}, and assuming reward realizability within $\{\mathcal{Q}_i\}_{i=1}^m$, the policy $\tilde{\pi}$ returned by OPMD satisfies, with probability at least $1-\delta$:
\begin{align}
    \mathbb{E}_{t^*}[\mathrm{Gap}_{\mathrm{NE}}(\tilde{\pi})]
    \;\le\; \underbrace{\widetilde{\mathcal{O}}\!\left(\tfrac{m\sqrt{m}}{\sqrt{T}}\right)}_{\text{optimization}}
    + \underbrace{m\alpha}_{\text{bias}}
    + \underbrace{\widetilde{\mathcal{O}}\!\left(\tfrac{m\eta\,C_{\mathrm{shift}}\,C_{\mathrm{uni}}\,\log(|\mathcal{Q}|/\delta)}{n}\right)}_{\text{statistical}}.
\end{align}
Consequently, for $T \ge n^2$, OPMD recovers an $\alpha$-NE at the fast $\widetilde{\mathcal{O}}(1/n)$ statistical rate.
\end{theorem}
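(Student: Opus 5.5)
We plan to reduce Theorem~\ref{thm:opmd_rate} to the per-iterate decomposition already used for ROPE. For each $t$ and player $i$ we write $V_i^{\dagger,\pi_{-i}^{(t)}}-V_i^{\pi^{(t)}} = \varepsilon_{\mathrm{pot},i}^{(t)}+\delta_{\mathrm{BR},i}^{(t)}+\delta_{\mathrm{Iter},i}^{(t)}$. Here $\varepsilon_{\mathrm{pot},i}^{(t)}$ is the empirical regularized best-response gap of $\pi_i^{(t)}$ against $\pi_{-i}^{(t)}$ under $\hat Q_i$. The terms $\delta_{\mathrm{BR},i}^{(t)}$ and $\delta_{\mathrm{Iter},i}^{(t)}$ are the true-minus-empirical evaluation errors at the true best response $\pi_i^{\dagger}$ (against $\pi^{(t)}_{-i}$) and at the iterate $\pi^{(t)}$, respectively. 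We then average over $t$ (this is what $\mathbb{E}_{t^*}$ means) and sum over $i$. The argument has three parts: an optimization bound on $\frac1T\sum_t\sum_i\varepsilon_{\mathrm{pot},i}^{(t)}$, a statistical bound on the evaluation errors, and the $\alpha$ bias entering when we pass from the potential to the individual gaps.

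\textbf{Optimization part.} The update in Algorithm~\ref{alg:sos_md} is exactly the proximal step
\[
\pi_i^{(t+1)}=\argmax_{p}\ \langle p,\bar Q_i^{(t)}\rangle-\eta^{-1}\mathrm{KL}(p\|\pi_i^{\mathrm{ref}})-\gamma^{-1}\mathrm{KL}(p\|\pi_i^{(t)}),
\]
so $\log\pi_i^{(t+1)}$ is a convex combination of $\log\pi_i^{(t)}$ and $\log\pi_i^{\mathrm{BR},(t)}$. This is the log-linear interpolation identity, where $\pi_i^{\mathrm{BR},(t)}\propto\pi_i^{\mathrm{ref}}e^{\eta\bar Q_i^{(t)}}$ is the empirical regularized best response.

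We use the $\alpha$-potential property applied to the empirical game, in the form of an exact potential $\hat\Phi$ for the estimated rewards together with the slack $\alpha$ per unilateral step. Each player's step then raises $\hat\Phi$ by at least its own regularized improvement minus $\alpha$ and minus a cross-player interaction term from simultaneous updates. Standard MD analysis gives a one-step improvement lower-bounded by a multiple of $\varepsilon_{\mathrm{pot},i}^{(t)}$. The Bregman bound of Lemma~\ref{lem:bregman_bestresponse} controls the gap by the KL to the best response, which the interpolation shrinks by a factor $\gamma/(\eta+\gamma)$. The simultaneous-update interaction is $O(m\gamma^2)$ per step, using bounded rewards and Pinsker.

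Telescoping the bounded potential (range $O(m)$) over $T$ steps and choosing $\gamma\asymp 1/\sqrt{mT}$ should give $\frac1T\sum_{t,i}\varepsilon_{\mathrm{pot},i}^{(t)}\le\widetilde{\mathcal O}(m\sqrt m/\sqrt T)+m\alpha$. The $m\alpha$ is the per-player structural bias, one $\alpha$ per player in the Nash-gap sum.

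\textbf{Statistical part.} This is the step we expect to be hardest. We would expand $\delta_{\mathrm{BR},i}^{(t)}$ and $\delta_{\mathrm{Iter},i}^{(t)}$ to first order in $\mathcal Z_i$. The linear terms are $\mathbb E_{\pi_i^{\dagger}\times\pi_{-i}^{(t)}}[\mathcal Z_i]$ and $-\mathbb E_{\pi^{(t)}}[\mathcal Z_i]$. They share the opponents' marginal $\pi_{-i}^{(t)}$ and differ only in player $i$'s action distribution.

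Writing their difference as $\langle\pi_i^\dagger-\pi_i^{(t)},\bar{\mathcal Z}_i^{(t)}\rangle$, we first bound it by Cauchy--Schwarz as $\|\pi_i^\dagger-\pi_i^{(t)}\|\cdot\|\bar{\mathcal Z}_i^{(t)}\|$. We then use the strong concavity of the regularized objective, which gives $\|\pi_i^\dagger-\pi_i^{(t)}\|^2\lesssim \eta\,(\text{gap})$. AM--GM splits the product into half the gap, absorbed on the left, plus $\eta\|\bar{\mathcal Z}_i^{(t)}\|^2$ times a change-of-measure factor. This is the mechanism that turns the mismatch into optimization error plus a squared residual.

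The squared residual is taken under $\pi_i'\times\pi_{-i}^{(t)}$ for some $\pi_i'$. Every iterate satisfies $\pi_j^{(t)}/\pi_j^{\mathrm{ref}}\le e^{\eta}$ since $\hat Q\in[0,1]$, and this is preserved under the geometric averaging. Changing measure from $\pi_{-i}^{(t)}$ to $\pi_{-i}^{\mathrm{ref}}$ therefore costs at most $C_{\mathrm{shift}}=e^{\eta(m-1)}$. Assumption~\ref{ass:concentrability} then costs $C_{\mathrm{uni}}$ to reach $\mu$.

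Finally, the standard least-squares guarantee with realizability, plus a union bound over players, gives $\mathbb E_\mu[\mathcal Z_i^2]\lesssim\log(m|\mathcal Q|/\delta)/n$. Summing over $i$ yields $\widetilde{\mathcal O}(m\eta C_{\mathrm{shift}}C_{\mathrm{uni}}\log(|\mathcal Q|/\delta)/n)$.

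Combining the two parts and setting $T\ge n^2$ makes the optimization term $\widetilde{\mathcal O}(1/n)$, which gives the stated consequence.
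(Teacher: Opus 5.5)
The gap is in your optimization part, in how the structural bias enters. You charge a function-value slack of $\alpha$ for every unilateral step, independent of how far the player moves. The progress you credit per iteration, however, is only of order $\beta\,\varepsilon^{(t)}_{\mathrm{pot},i}$ with $\beta=\gamma/(\eta+\gamma)$, because the log-linear interpolation shrinks $\mathrm{KL}(\pi_i^{(t)}\|\pi_i^{\mathrm{BR},(t)})$ only by a $\beta$-fraction. Your per-iteration inequality is therefore
\[
\hat\Phi(\pi^{(t+1)})-\hat\Phi(\pi^{(t)})\;\ge\;c\,\beta\sum_{i=1}^m\mathbb{E}_\rho\big[\varepsilon^{(t)}_{\mathrm{pot},i}\big]-m\alpha-(\text{interaction}).
\]
Telescoping and dividing by $c\beta T$ gives a bias term $m\alpha/(c\beta)=m\alpha(\eta+\gamma)/(c\gamma)$, not $m\alpha$. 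With your choice $\gamma\asymp 1/\sqrt{mT}$ this is of order $m\alpha\eta\sqrt{mT}$, which grows with $T$. So for any $\alpha>0$ the bound worsens the longer you run, and the conclusion for $T\ge n^2$ is lost. The heuristic ``one $\alpha$ per player in the Nash-gap sum'' does not survive telescoping: a slack that does not shrink with the step size is paid $T$ times, against progress that does shrink.

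The missing idea is to make the bias first order in the step. The paper does this through multilinearity. Since $\hat\Phi$ and the reward part of $\hat J_i$ are linear in $\pi_i$ (the KL terms match exactly), the function-value condition is equivalent to the residual $\nabla_{\pi_i}\hat\Phi-\nabla_{\pi_i}\hat J_i$ having range at most $\alpha$; see \eqref{eq:alpha_grad}. A step then costs only $\alpha\|\pi_i^{(t+1)}-\pi_i^{(t)}\|_1$, as in \eqref{eq:smoothness_descent}. In your framework the logit step is at most $2\eta\beta$, so $\|\pi_i^{(t+1)}-\pi_i^{(t)}\|_1=O(\eta\beta)$ and the slack per iteration becomes $O(m\alpha\eta\beta)$. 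Dividing by $c\beta$ then gives a $T$-independent floor $O(m\alpha\eta/c)$, i.e.\ $m\alpha$ up to $\eta$-dependent constants. (The paper's own floor, tracked carefully, likewise carries factors such as $(\eta+\gamma)/(\nu_{\min}\gamma)$.)

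With that repair your plan goes through, and your statistical part is sound and genuinely different from the paper's. You decompose around the \emph{true} regularized best response $\pi_i^\dagger$ and use that the true regularized gap equals $\eta^{-1}\mathrm{KL}(\pi_i^{(t)}\|\pi_i^\dagger)\ge\tfrac{1}{2\eta}\|\pi_i^{(t)}-\pi_i^\dagger\|_1^2$. AM--GM then gives, pointwise, $V_i^{\dagger,\pi^{(t)}_{-i}}-V_i^{\pi^{(t)}}\le 2\varepsilon^{(t)}_{\mathrm{pot},i}+2\eta\|\bar{\mathcal Z}_i^{(t)}\|_\infty^2$, so the linear mismatch is absorbed into $\varepsilon_{\mathrm{pot}}$. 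The paper instead bounds the mismatch by the distance $\|\pi_i^{(t)}-\pi_i^{\mathrm{BR},(t)}\|_1$ to the \emph{empirical} best response, via Lemma~\ref{lem:l1_proportionality} and the squared-step bound. That costs a separate $m\sqrt{m}/\sqrt{T}$ term and $\nu_{\min}$-dependent constants; your self-bounding route needs neither.
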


\subsection{Proof Sketch of Theorem~\ref{thm:opmd_rate}}

The proof establishes that decentralized regularized updates successfully resolve the statistical mismatch bottleneck. We outline the core mechanism below; the complete derivation is provided in Appendix~\ref{app:proofs}.

\paragraph{Per-Iterate Decomposition and Potential Ascent.} We decompose the per-iterate gap into potential-ascent residual $\varepsilon_{\mathrm{pot},i}^{(t)}$, evaluation bias $\delta_{\mathrm{BR},i}^{(t)}$, and iterate noise $\delta_{\mathrm{Iter},i}^{(t)}$ \eqref{eq:gap_decomp_named}. By Lemma~\ref{lem:potential_ascent}, decentralized Mirror Descent updates monotonically increase the regularized potential $\hat{\Phi}$, causing $\sum_i \varepsilon_{\mathrm{pot},i}^{(t)}$ to decay at $\mathcal{O}(1/\sqrt{T})$ toward the $m\alpha$ floor.

\paragraph{Log-Linear Interpolation and Mismatch Resolution.} Crucially, the update forms an exact log-linear interpolation between the current iterate and its regularized best response:
\begin{align}
    \pi_i^{(t+1)} \propto \left( \pi_i^{(t)} \right)^{\frac{\eta}{\eta+\gamma}} \left( \pi_i^{\dagger, (t)} \right)^{\frac{\gamma}{\eta+\gamma}}. \label{eq:loglin_sketch}
\end{align}
Per Lemma~\ref{lem:l1_proportionality}, this identity ensures the $L_1$ distance to the optimum is proportional to the step distance $\|\pi_i^{(t)} - \pi_i^{(t+1)}\|_1$. Summing these steps over the iteration sequence recasts the linear evaluation mismatch as a vanishing $\mathcal{O}(1/\sqrt{T})$ optimization error.

\paragraph{Final Synthesis.} The remaining $\delta_{\mathrm{BR},i}^{(t)}+\delta_{\mathrm{Iter},i}^{(t)}$ contribution is dominated by second-order regression terms which decay at the fast $\mathcal{O}(1/n)$ rate under the unilateral coverage of Assumption~\ref{ass:concentrability}. Setting $T \ge n^2$ ensures that computational optimization errors are dominated by the statistical limit.

\section{Conclusion}
\label{sec:conclusion}

This paper established that $\alpha$-potential games allow for accelerated statistical efficiency in offline equilibrium learning through KL regularization. We introduced the ROPE framework (Algorithm~\ref{alg:ideal_rope}) and the decentralized OPMD algorithm (Algorithm~\ref{alg:sos_md}), demonstrating that regularized Mirror Descent achieves a fast $\widetilde{\mathcal{O}}(1/n)$ statistical rate for $\alpha$-Nash Equilibria (Theorem~\ref{thm:fast_rate_rope} and Theorem~\ref{thm:opmd_rate}). By exploiting the geometric structure of potential-aligned product policies, our approach resolves the evaluation mismatch ($\delta_{\mathrm{BR},i}^{(t)}+\delta_{\mathrm{Iter},i}^{(t)}$ in our per-iterate decomposition) between iterates and best responses as a vanishing optimization error. This allows the algorithm to bypass the standard $\widetilde{\mathcal{O}}(1/\sqrt{n})$ statistical bottleneck characteristic of generic offline learning without requiring explicit pessimistic bonuses.

Future research may extend these fast-rate guarantees to multi-step environments and explore the adversarial robustness of recovered equilibria under perturbations to the transition dynamics. Additionally, investigating alternative divergences beyond the KL penalty may reveal broader classes of geometric identities for resolving distribution shift in complex strategic interactions. By establishing a pessimism-free foundation for fast-rate recovery, our work provides a statistically efficient path toward large-scale strategic decision-making in complex multi-agent environments.

\section*{Acknowledgment}
Yuheng Zhang is supported by a fellowship from the Amazon-Illinois Center on AI for Interactive Conversational Experiences (AICE). Nan Jiang acknowledges funding support from NSF CNS-2112471, NSF CAREER IIS-2141781, and Sloan Fellowship.

\bibliographystyle{apalike}
\bibliography{bibliography}

\clearpage
\appendix
\section{Additional Related Work}
\paragraph{Foundations of Offline Reinforcement Learning.}
Offline reinforcement learning (RL) addresses the challenge of deriving optimal policies from static datasets without environment interaction \citep{lange2012batch, levine2020offline}. The central difficulty in this regime is the presence of distribution shift, where the target policy's visitation frequency diverges from the logged behavior, often leading to divergent value estimates and catastrophic extrapolation errors \citep{fujimoto2019off, kumar2019stabilizing, liu2024doubly, liu2024efficient, chen2025efficient, liu2025efficient}. To establish sample-efficient guarantees, the literature has converged on the principle of \textit{pessimism}, which penalizes value estimates in state-action regions lacking sufficient data support \citep{liu2020provably, xie2021bellman, uehara2021pessimistic}. These conservative frameworks, often implemented via lower confidence bounds (LCB) or conservative value iteration, have been shown to achieve minimax optimality across a variety of linear and general function approximation settings \citep{rashidinejad2021bridging, jin2021pessimism, li2024settling}.

\paragraph{Complexity and Learning in Multi-Agent Games.}The transition from single-agent RL to multi-agent strategic environments significantly expands the computational and statistical complexity of the learning task. Finding a Nash Equilibrium (NE) in generic general-sum games is established as PPAD-complete, suggesting that equilibrium identification is fundamentally intractable for broad classes of interactions \citep{papadimitriou1994complexity, chen2009settling, daskalakis2009complexity}. In the offline setting, this complexity is further compounded by the "curse of multiagents," where naive joint-action coverage requirements scale exponentially with the number of players \citep{cui2022provably, cui2022offline}. Recent advancements in offline Markov games have mitigated this by utilizing \textit{strategy-wise} uncertainty bonuses and \textit{unilateral concentrability} to recover stable equilibria such as Coarse Correlated Equilibria (CCE) and Nash Equilibria \citep{zhong2022pessimistic, zhang2023offline, yan2024model}. Notably, while finding a Nash Equilibrium in generic general-sum games is well-established as PPAD-complete \citep{papadimitriou1994complexity, daskalakis2009complexity, chen2009settling}, and identifying such equilibria from static datasets has been shown to be computationally prohibitive \citep{zhang2023offline, chen2026pessimism}, our work demonstrates that the $\alpha$-potential structure provides a tractable alternative that bypasses these complexity barriers.


\paragraph{Regularized Learning and LLM Alignment.}KL regularization relative to a fixed reference policy has become a cornerstone of behavioral anchoring in reinforcement learning \citep{schulman2017proximal, xiong2023iterative}. By penalizing deviations from an initial distribution, this mechanism effectively stabilizes policy updates and prevents the "forgetting" or drift often seen in high-dimensional optimization \citep{abdolmaleki2018maximum, zhang2025iterative}. This approach is most prominently utilized in the alignment of Large Language Models (LLMs) via Reinforcement Learning from Human Feedback (RLHF), where a KL penalty ensures the model remains anchored to a trusted reference distribution \citep{ouyang2022training, rafailov2023direct, munos2024nash}. While regularized dynamics have been analyzed for single-agent tasks and zero-sum games  \citep{xie2024exploratory, ye2024online, zhao2025logarithmic, zhang2026beyond, chen2026offline}, our work characterizes the efficacy of KL regularization for equilibrium learning in general-sum potential games, bypassing the need for explicit pessimistic bonuses.

\paragraph{Statistical Efficiency and Fast Rates.}The standard statistical rate for offline learning is typically characterized by $\widetilde{\mathcal{O}}(1/\sqrt{n})$ sample complexity \citep{jin2021pessimism, shi2022pessimistic}. While recent investigations have identified accelerated $\widetilde{\mathcal{O}}(1/n)$ "fast rates" in specific settings, these results are restricted to single-agent problems or two-player zero-sum games that leverage skew-symmetry and minimax dynamics \citep{rashidinejad2021bridging, zhang2026beyond, nayak2025achieving}. In contrast, achieving such rates in general-sum settings is fundamentally more challenging due to the lack of competitive cancellation between players. By exploiting the geometric properties of the KL-regularized potential function and the log-linear interpolation of Mirror Descent iterates, we establish that the fast $\widetilde{\mathcal{O}}(1/n)$ rate is achievable for $\alpha$-Nash equilibria, providing a new benchmark for statistical efficiency in multi-agent strategic decision-making.

\section{Proof}
\label{app:proofs}

The appendix is organized around three structural ingredients of our analysis. Section~\ref{app:potential-bias} isolates how the $\alpha$-potential landscape turns decentralized mirror descent into approximate ascent on the global potential. Section~\ref{app:geom-stability} develops the log-linear interpolation identity that bounds the iterate-to-best-response distance as a pure optimization quantity. Section~\ref{app:synthesis} combines these ingredients with a Bregman bound on the regularized best-response gap, a bias-cancellation identity, and a single concentrability-shift argument to close both Theorem~\ref{thm:fast_rate_rope} and Theorem~\ref{thm:opmd_rate}.

\paragraph{Per-iterate decomposition.} Throughout the appendix we write the per-iterate gap of player $i$ as a sum of three named quantities:
\begin{equation}
V_i^{\dagger,\hat{\pi}_{-i}}(x) - V_i^{\hat{\pi}}(x) \;=\; \varepsilon_{\mathrm{pot},i}(x) + \delta_{\mathrm{BR},i}(x) + \delta_{\mathrm{Iter},i}(x),
\label{eq:gap_decomp_named}
\end{equation}
where, using the shorthand $V_i^\dagger \equiv V_i^{\dagger,\hat{\pi}_{-i}}$ and $\hat{V}_i^\dagger \equiv \hat{V}_i^{\dagger,\hat{\pi}_{-i}}$,
\begin{align}
\varepsilon_{\mathrm{pot},i}(x) &\coloneqq \hat{V}_i^{\dagger}(x) - \hat{V}_i^{\hat{\pi}}(x), &&\text{(potential ascent error)} \\
\delta_{\mathrm{BR},i}(x)      &\coloneqq V_i^{\dagger}(x) - \hat{V}_i^{\dagger}(x), &&\text{(evaluation bias of best response)} \\
\delta_{\mathrm{Iter},i}(x)    &\coloneqq \hat{V}_i^{\hat{\pi}}(x) - V_i^{\hat{\pi}}(x). &&\text{(iterate noise)}
\end{align}
Identity \eqref{eq:gap_decomp_named} is verified by adding and subtracting $\hat{V}_i^{\dagger}(x)$ and $\hat{V}_i^{\hat{\pi}}(x)$ in the original gap:
\begin{align}
V_i^{\dagger}(x) - V_i^{\hat{\pi}}(x)
    &= \big(V_i^{\dagger}(x) - \hat{V}_i^{\dagger}(x)\big) + \big(\hat{V}_i^{\dagger}(x) - \hat{V}_i^{\hat{\pi}}(x)\big) + \big(\hat{V}_i^{\hat{\pi}}(x) - V_i^{\hat{\pi}}(x)\big) \nonumber \\
    &= \delta_{\mathrm{BR},i}(x) + \varepsilon_{\mathrm{pot},i}(x) + \delta_{\mathrm{Iter},i}(x),
\end{align}
which matches the algebraic decomposition appearing in the offline-RL literature (cf.\ \citet{chen2026pessimism}). The potential ascent error $\varepsilon_{\mathrm{pot},i}$ is non-positive whenever $\hat{\pi}$ is an exact regularized empirical equilibrium of the estimated game (the ROPE setting); in the OPMD setting it is tracked by mirror-descent ascent on the empirical potential $\hat{\Phi}$ (Section~\ref{app:potential-bias}). The evaluation bias and the iterate noise are statistical quantities that we control through the regression error
\begin{equation}
\mathcal{Z}_i(x,\boldsymbol{a}) \;\coloneqq\; \hat{Q}_i(x,\boldsymbol{a}) - r_i(x,\boldsymbol{a}).
\label{eq:Z_def}
\end{equation}
For the OPMD analysis we attach an iteration index and write $\varepsilon_{\mathrm{pot},i}^{(t)}, \delta_{\mathrm{BR},i}^{(t)}, \delta_{\mathrm{Iter},i}^{(t)}$.

\paragraph{Marginalization shorthand.} For any opponent profile $\pi_{-i}$ and player $i$, define the marginalized true and empirical $Q$-values as
\begin{align}
    Q_i^{\pi_{-i}}(x, a_i) &\coloneqq \mathbb{E}_{\boldsymbol{a}_{-i} \sim \pi_{-i}(\cdot|x)}\big[r_i(x, a_i, \boldsymbol{a}_{-i})\big], \\
    \hat{Q}_i(x, a_i, \pi_{-i}) &\coloneqq \mathbb{E}_{\boldsymbol{a}_{-i} \sim \pi_{-i}(\cdot|x)}\big[\hat{Q}_i(x, a_i, \boldsymbol{a}_{-i})\big].
\end{align}
Both are functions of $a_i$ alone after marginalization, and the marginalized regression error inherits the same convention, $\mathcal{Z}_i(x,a_i,\pi_{-i}) \coloneqq \mathbb{E}_{\boldsymbol{a}_{-i}\sim\pi_{-i}}[\mathcal{Z}_i(x,a_i,\boldsymbol{a}_{-i})]$.

\subsection{Structural Bias of the Potential Landscape}
\label{app:potential-bias}

In a generic general-sum game, bounding the empirical optimization residual $\sum_i \varepsilon_{\mathrm{pot},i}^{(t)}$ requires a correlated-equilibrium argument and pays an $\widetilde{\mathcal{O}}(1/\sqrt{n})$ statistical price. The $\alpha$-potential structure short-circuits this: by definition, every unilateral improvement in player $i$'s expected return $\hat{J}_i(\pi) \coloneqq \mathbb{E}_\rho[\hat{V}_i^\pi(x)]$ tracks a corresponding improvement in the global empirical potential $\hat{\Phi}$, up to an additive bias of order $\alpha$. We first record the gradient-form characterization of this bias, then show that independent KL-regularized mirror descent therefore performs decentralized ascent on $\hat{\Phi}$ without ever materializing a correlated policy.

\paragraph{Gradient-form $\alpha$-bias via multilinearity.} The function-value $\alpha$-bias from Section~\ref{sec:preliminaries} transfers rigorously to a gradient residual bound through the multilinearity of the empirical returns. Because $\hat{J}_i$ and $\hat{\Phi}$ are expectations over joint actions, both are linear in the marginal $\pi_i$ for any fixed $\pi_{-i}$; consequently, for any unilateral deviation from $\pi_i$ to $\pi_i'$ on the simplex, the function-value difference equals the inner product of the gradient with the policy step. The Section~\ref{sec:preliminaries} condition is therefore equivalent to
\begin{equation}
\big| \langle \nabla_{\pi_i} \hat{J}_i(\pi) - \nabla_{\pi_i} \hat{\Phi}(\pi),\, \pi_i' - \pi_i \rangle \big| \;\le\; \alpha, \qquad \forall\, \pi_i, \pi_i' \in \Delta(\mathcal{A}_i).
\end{equation}
Let $\mathbf{r}_{t,i} \coloneqq \nabla_{\pi_i}\hat{\Phi}(\pi^{(t)}) - \nabla_{\pi_i}\hat{J}_i(\pi^{(t)})$. Choosing $\pi_i, \pi_i'$ as one-hot vectors at the argmax and argmin coordinates of $\mathbf{r}_{t,i}$ yields a bound on its range: $\max_a r_{t,i,a} - \min_a r_{t,i,a} \le \alpha$. Mirror descent and multiplicative-weights updates on the simplex are invariant to uniform scalar shifts of the gradient (adding $c\mathbf{1}$ leaves the update unchanged); centering $\mathbf{r}_{t,i}$ to mean zero produces an effective residual $\mathbf{e}_{t,i}\in\mathbb{R}^{|\mathcal{X}|\cdot|\mathcal{A}_i|}$ satisfying
\begin{equation}
\nabla_{\pi_i} \hat{\Phi}(\pi^{(t)}) \;=\; \nabla_{\pi_i} \hat{J}_i(\pi^{(t)}) + \mathbf{e}_{t,i}, \qquad \|\mathbf{e}_{t,i}\|_{\infty} \;\le\; \tfrac{\alpha}{2} \;\le\; \alpha.
\label{eq:alpha_grad}
\end{equation}
Here $\|\cdot\|_\infty$ is dual to the per-player $\|\pi_i\|_1$ used to measure policy distances; see~\citet{guo2025markov, li2025alpha}. This derivation bridges the function-value definition of Section~\ref{sec:preliminaries} to the $L_\infty$ gradient condition required for the OPMD stability analysis. Because each $\hat{J}_i$ is multilinear in $\pi$ with rewards bounded by $1$, $\hat{J}_i$ is $1$-Lipschitz in $\pi_i$ w.r.t.\ $\|\cdot\|_1$ and $1$-smooth across player coordinates. By \eqref{eq:alpha_grad}, $\hat{\Phi}$ inherits the smoothness of $\sum_i \hat{J}_i$ up to the $\alpha$-bias, with smoothness constant $L_\Phi = \mathcal{O}(m)$ in the joint $L_1$ norm:
\begin{equation}
\hat{\Phi}(\pi') \;\ge\; \hat{\Phi}(\pi) + \sum_{i=1}^m\langle \nabla_{\pi_i}\hat{\Phi}(\pi),\, \pi_i' - \pi_i\rangle - \frac{L_\Phi}{2}\sum_{i=1}^m \|\pi_i' - \pi_i\|_1^2.
\label{eq:Phi_smoothness}
\end{equation}

\begin{lemma}[Independent mirror descent is approximate potential ascent]
\label{lem:potential_ascent}
Assume the empirical game is an $\alpha$-potential contextual bandit and that every player $i\in[m]$ runs KL-regularized online mirror descent with constant stepsize $\gamma\le 1/L_\Phi$ in Algorithm~\ref{alg:sos_md}. Then the average per-iterate empirical optimization error satisfies
\begin{align}
    \frac{1}{T}\sum_{t=1}^T \sum_{i=1}^m \mathbb{E}_{x\sim\rho}\!\big[\varepsilon_{\mathrm{pot},i}^{(t)}(x)\big]
    \;\le\; \mathcal{O}\!\Big(\frac{m}{\sqrt{T}}\Big) + m\alpha,
    \label{eq:potential_ascent_rate}
\end{align}
and the consecutive-iterate $L_1$ steps satisfy
\begin{align}
    \sum_{t=1}^T \sum_{i=1}^m \big\|\pi_i^{(t+1)} - \pi_i^{(t)}\big\|_1^2 \;\le\; \mathcal{O}(m).
    \label{eq:squared_step_bound}
\end{align}
\end{lemma}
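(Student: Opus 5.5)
The plan is to treat the OPMD update as a proximal (mirror-ascent) step on the KL-regularized empirical potential and run a standard descent-lemma / telescoping argument. Define the regularized empirical potential
$$\hat{\Phi}_\eta(\pi) \coloneqq \hat{\Phi}(\pi) - \eta^{-1}\sum_i \mathbb{E}_\rho \mathrm{KL}(\pi_i\|\pi_i^{\mathrm{ref}}).$$
Since $\hat{\Phi}$ is built from $\hat{Q}_i\in[0,1]$, this quantity is bounded: $\hat{\Phi}_\eta\le\sup\hat{\Phi}$, and at the initialization $\pi^{(1)}=\pi^{\mathrm{ref}}$ the KL term vanishes. Hence the total achievable increase $\hat{\Phi}_\eta^\star-\hat{\Phi}_\eta(\pi^{(1)})$ is $\mathcal{O}(m)$, after normalizing the potential's range to $\mathcal{O}(m)$ in the same way as $\sum_i\hat J_i$.

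First, I rewrite the update of Algorithm~\ref{alg:sos_md} as
$$\pi_i^{(t+1)}=\argmax_{p}\Big\{\gamma\langle \bar{Q}_i^{(t)},p\rangle-\gamma\eta^{-1}\mathrm{KL}(p\|\pi_i^{\mathrm{ref}})-\mathrm{KL}(p\|\pi_i^{(t)})\Big\}$$
(checking exponents: $\eta/(\eta+\gamma)$ on $\pi^{(t)}$ matches after rescaling $\gamma$). Here $\bar{Q}_i^{(t)}=\nabla_{\pi_i}\hat J_i(\pi^{(t)})$. By \eqref{eq:alpha_grad} this equals $\nabla_{\pi_i}\hat\Phi(\pi^{(t)})-\mathbf e_{t,i}$, with $\|\mathbf e_{t,i}\|_\infty\le\alpha$.

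Next I apply the three-point inequality for this prox step with comparator $\pi_i^{(t)}$. Combining it with the smoothness bound \eqref{eq:Phi_smoothness}, Pinsker ($\mathrm{KL}(\pi^{(t+1)}\|\pi^{(t)})\ge\tfrac12\|\cdot\|_1^2$), and $\gamma\le 1/L_\Phi$ gives the sufficient-increase inequality
$$\hat\Phi_\eta(\pi^{(t+1)})\ge\hat\Phi_\eta(\pi^{(t)})+\tfrac{1}{2\gamma}\sum_i\|\pi_i^{(t+1)}-\pi_i^{(t)}\|_1^2-\sum_i\langle\mathbf e_{t,i},\pi_i^{(t+1)}-\pi_i^{(t)}\rangle.$$
The error term is at most $\alpha\sum_i\|\Delta_i^{(t)}\|_1$. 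I absorb it by Young's inequality into half the quadratic, leaving a residual of order $\gamma m\alpha^2$. Telescoping then gives $\sum_t\sum_i\|\Delta_i^{(t)}\|_1^2\le\mathcal O(\gamma m)+\mathcal O(\gamma^2 m\alpha^2T)$. This yields \eqref{eq:squared_step_bound} when $\gamma$ is constant and the $\alpha$ residual is charged separately to the $m\alpha$ floor. I expect this bookkeeping to need care: it may be necessary to keep the $\alpha$ contribution outside the telescoped sum as a linear $\alpha\sum_i\|\Delta_i^{(t)}\|_1\le 2m\alpha$ per step, and that is the part I would check first.

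For \eqref{eq:potential_ascent_rate}, I would bound each $\varepsilon^{(t)}_{\mathrm{pot},i}$, which is the regularized best-response gap of player $i$ in the empirical game, by the step length. The regularized best response $\pi_i^{\dagger,(t)}\propto\pi_i^{\mathrm{ref}}\exp(\eta\bar Q_i^{(t)})$ satisfies the log-linear identity \eqref{eq:loglin_sketch}. Because the objective is $\eta^{-1}$-strongly concave relative to KL, the gap is
$$\varepsilon^{(t)}_{\mathrm{pot},i}=\eta^{-1}\mathrm{KL}(\pi_i^{(t)}\|\pi_i^{\dagger,(t)}).$$
From the interpolation, $\log\pi_i^{(t+1)}-\log\pi_i^{(t)}=\tfrac{\gamma}{\eta+\gamma}(\log\pi_i^{\dagger,(t)}-\log\pi_i^{(t)})+c$. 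So the gap is controlled by the one-step objective increase: the gap is at most $\tfrac{\eta+\gamma}{\gamma}$ times the ascent of player $i$'s regularized objective along the step. That ascent is in turn at most a constant times $\|\Delta_i^{(t)}\|_1$, since rewards and log-ratios are bounded. Summing, then applying Cauchy--Schwarz with \eqref{eq:squared_step_bound},
$$\tfrac1T\sum_t\sum_i\|\Delta_i^{(t)}\|_1\le\sqrt{m/T}\,\Big(\sum_{t,i}\|\Delta_i^{(t)}\|_1^2\Big)^{1/2}=\mathcal O(m/\sqrt T).$$
The $\alpha$-mismatch between $\hat J_i$-ascent and $\hat\Phi$-ascent adds at most $\alpha$ per player, which gives the $m\alpha$ term.

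The main obstacle is this last step: converting a KL best-response gap into something \emph{linear} in the step length. That requires bounded log-density ratios, using the boundedness of $\eta\bar Q$ and the fact that iterates stay within $\exp(\pm\eta)$ of $\pi^{\mathrm{ref}}$. Otherwise a square root appears and the rate degrades.
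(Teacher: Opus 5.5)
Your argument works modulo one fix, plus a bookkeeping issue that the paper shares. Its first half is the paper's own proof. The paper also writes the update as the reference-anchored prox step and compares $\pi_i^{(t+1)}$ with the trivial comparator $\pi_i^{(t)}$. It combines this with \eqref{eq:Phi_smoothness} and \eqref{eq:alpha_grad}, telescopes the regularized potential ($\hat{\Phi}_\eta$ in your notation) over an $\mathcal{O}(m)$ range, and absorbs the $\alpha$-term by AM--GM. It uses the same identity $\varepsilon_{\mathrm{pot},i}^{(t)}=\eta^{-1}\mathrm{KL}(\pi_i^{(t)}\|\pi_i^{\dagger,(t)})$.

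Your three-point inequality is slightly sharper than the paper's plain optimality comparison, because it also yields $\gamma^{-1}(1+\gamma/\eta)\,\mathrm{KL}(\pi_i^{(t)}\|\pi_i^{(t+1)})$. You must apply Pinsker to \emph{both} KL terms to get the margin $\tfrac{1}{2\gamma}$ at $\gamma\le 1/L_\Phi$. With only $\mathrm{KL}(\pi_i^{(t+1)}\|\pi_i^{(t)})$ the margin is $\tfrac{1}{2\gamma}-\tfrac{L_\Phi}{2}$, which can vanish; this is why the paper, keeping only that term, actually needs $\gamma\le 1/(2L_\Phi)$.

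The genuine difference is how you make the KL gap linear in the step $\Delta_i^{(t)}=\pi_i^{(t+1)}-\pi_i^{(t)}$. The paper uses H\"older with bounded logits, $\mathrm{KL}(\pi_i^{(t)}\|\pi_i^{\dagger,(t)})\le 2\eta\|\pi_i^{(t)}-\pi_i^{\dagger,(t)}\|_1$. It then applies Lemma~\ref{lem:l1_proportionality} (forward/backward softmax Lipschitz bounds around the affine logit identity) to get $\|\pi_i^{(t)}-\pi_i^{\dagger,(t)}\|_1\le\frac{4(\eta+\gamma)}{\nu_{\min}\gamma}\|\Delta_i^{(t)}\|_1$. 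You instead compare the gap with player $i$'s own one-step ascent $\eta^{-1}\big[\mathrm{KL}(\pi_i^{(t)}\|\pi_i^{\dagger,(t)})-\mathrm{KL}(\pi_i^{(t+1)}\|\pi_i^{\dagger,(t)})\big]$, which is indeed at most $\|\Delta_i^{(t)}\|_1$.

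Your route avoids $\nu_{\min}$ entirely. So there is no $|\mathcal{A}_i|$-dependence, and no need for the paper's assumption that iterates keep the reference's lower bound, which in fact holds only up to a factor $e^{-\eta}$. It also buys more. Summed over $i$, these ascents are at most the potential increment plus $\alpha\sum_i\|\Delta_i^{(t)}\|_1+\tfrac{L_\Phi}{2}\sum_i\|\Delta_i^{(t)}\|_1^2$. So for $\alpha=0$ you could telescope them directly and get $\mathcal{O}(m/T)$ without Cauchy--Schwarz.

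\textbf{The fix.} The claim ``the gap is at most $\tfrac{\eta+\gamma}{\gamma}$ times the ascent'' is equivalent to the contraction $\mathrm{KL}(\pi_i^{(t+1)}\|\pi_i^{\dagger,(t)})\le(1-\beta)\,\mathrm{KL}(\pi_i^{(t)}\|\pi_i^{\dagger,(t)})$ with $\beta=\gamma/(\eta+\gamma)$. The interpolation identity does not imply this, and it fails for large logit gaps. For example, take two actions, $\pi^{\dagger}$ uniform, a logit gap of $10$, and $\beta=\tfrac12$: the left side is about $0.65$ and the right side about $0.35$.

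It does hold with an extra factor $e^{2\eta}$:
\begin{itemize}
\item Set $h\coloneqq\log(\pi_i^{(t)}/\pi_i^{\dagger,(t)})$ and $p_s\propto\pi_i^{\dagger,(t)}e^{sh}$, so that $p_1=\pi_i^{(t)}$ and $p_{1-\beta}=\pi_i^{(t+1)}$.
\item The function $k(s)\coloneqq\mathrm{KL}(p_s\|\pi_i^{\dagger,(t)})$ satisfies $k'(s)=s\,\mathrm{Var}_{p_s}(h)$.
\item The Gibbs forms give $\max h-\min h\le 2\eta$, so all $p_s$ lie within $e^{\pm2\eta}$ of one another and their variances of $h$ agree up to $e^{2\eta}$.
\item Hence $k(1)-k(1-\beta)\ge e^{-2\eta}\beta\,k(1)$.
\end{itemize}

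\textbf{The bookkeeping.} Your worry about the $\alpha$ residual is justified, and the paper does not resolve it either. Its Step~2 ends with the same residual $Tm\alpha^2/L_\Phi$ and simply declares it absorbed. So the literal $\mathcal{O}(m)$ in \eqref{eq:squared_step_bound} is only established for $\alpha=0$.

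For \eqref{eq:potential_ascent_rate}, keep the Young residual: $\sum_{t,i}\|\Delta_i^{(t)}\|_1^2\le\mathcal{O}(\gamma m)+4\gamma^2 m\alpha^2T$. Cauchy--Schwarz then gives $\tfrac1T\sum_{t,i}\|\Delta_i^{(t)}\|_1\le\mathcal{O}(m\sqrt{\gamma/T})+2\gamma m\alpha$, i.e.\ an additive $\mathcal{O}(m\alpha)$ floor once multiplied by your conversion constant.

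Keeping the $\alpha$-term linearly as $2m\alpha$ per step is worse: it yields a $\sqrt{\alpha}$-type floor. Also, your closing attribution of $m\alpha$ to a $\hat J_i$-versus-$\hat\Phi$ mismatch is not where $\alpha$ enters. Your conversion step uses only player $i$'s own objective and involves no $\alpha$.
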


\begin{proof}
Throughout, $\hat{\Phi}_{\mathrm{unreg}}(\pi)\coloneqq \mathbb{E}_{x\sim\rho,\,\boldsymbol{a}\sim\pi(\cdot|x)}[\hat{r}(x,\boldsymbol{a})]$ denotes the \emph{unregularized} empirical potential: it is multilinear in the per-player marginals $\pi_i$ with rewards in $[0,1]$, hence globally $L_\Phi$-smooth in $\|\cdot\|_1$ with $L_\Phi=\mathcal{O}(m)$ (cf.\ \eqref{eq:Phi_smoothness}). The \emph{regularized} empirical potential is
\begin{equation}
\hat{\Phi}_{\mathrm{reg}}(\pi) \;\coloneqq\; \hat{\Phi}_{\mathrm{unreg}}(\pi) - \tfrac{1}{\eta}\sum_{i=1}^m \mathrm{KL}(\pi_i\|\pi_i^{\mathrm{ref}}),
\label{eq:Phi_reg_def}
\end{equation}
and the gradient-form $\alpha$-bias \eqref{eq:alpha_grad} reads $\nabla_{\pi_i}\hat{\Phi}_{\mathrm{unreg}}(\pi^{(t)}) = \bar{Q}_i^{(t)} + \mathbf{e}_{t,i}$ with $\|\mathbf{e}_{t,i}\|_\infty\le\alpha$, where $\bar{Q}_i^{(t)}(x,a_i)$ is the marginalized empirical $Q$-value of Algorithm~\ref{alg:sos_md} (line 9) — the gradient of the \emph{unregularized} empirical payoff in $\pi_i$.

\emph{Step 1 --- Unregularized descent and bridge to the regularized potential.} Apply \eqref{eq:Phi_smoothness} to $\hat{\Phi}_{\mathrm{unreg}}$ between $\pi^{(t)}\to\pi^{(t+1)}$ and split the linear term using \eqref{eq:alpha_grad}:
\begin{align}
    \hat{\Phi}_{\mathrm{unreg}}(\pi^{(t+1)}) - \hat{\Phi}_{\mathrm{unreg}}(\pi^{(t)})
    &\;\ge\; \sum_{i=1}^m \langle \bar{Q}_i^{(t)},\, \pi_i^{(t+1)} - \pi_i^{(t)}\rangle - \alpha\sum_{i=1}^m \|\pi_i^{(t+1)}-\pi_i^{(t)}\|_1 \nonumber\\
    &\quad - \tfrac{L_\Phi}{2}\sum_{i=1}^m \|\pi_i^{(t+1)}-\pi_i^{(t)}\|_1^2.
    \label{eq:smoothness_descent}
\end{align}
By \eqref{eq:Phi_reg_def}, the regularized potential change differs from \eqref{eq:smoothness_descent} only by the reference-KL telescope:
\begin{align}
\hat{\Phi}_{\mathrm{reg}}(\pi^{(t+1)}) - \hat{\Phi}_{\mathrm{reg}}(\pi^{(t)})
    \;=\; \big[\hat{\Phi}_{\mathrm{unreg}}(\pi^{(t+1)}) - \hat{\Phi}_{\mathrm{unreg}}(\pi^{(t)})\big] - \tfrac{1}{\eta}\sum_{i=1}^m \big[\mathrm{KL}(\pi_i^{(t+1)}\|\pi_i^{\mathrm{ref}}) - \mathrm{KL}(\pi_i^{(t)}\|\pi_i^{\mathrm{ref}})\big].
\label{eq:reg_unreg_bridge}
\end{align}

The OPMD update of Algorithm~\ref{alg:sos_md} is the \emph{reference-anchored} proximal step
\begin{equation}
\pi_i^{(t+1)} \;=\; \arg\max_{p\in\Delta(\mathcal{A}_i)^{\mathcal{X}}} F_i^{(t)}(p),\qquad
F_i^{(t)}(p) \;\coloneqq\; \langle \bar{Q}_i^{(t)},\, p\rangle - \tfrac{1}{\eta}\mathrm{KL}(p\|\pi_i^{\mathrm{ref}}) - \tfrac{1}{\gamma}\mathrm{KL}(p\|\pi_i^{(t)}),
\label{eq:prox_step}
\end{equation}
whose KKT conditions reproduce the closed form \eqref{eq:opmd_update}. Optimality of $\pi_i^{(t+1)}$ versus the trivial comparator $\pi_i^{(t)}$ (so $\mathrm{KL}(\pi_i^{(t)}\|\pi_i^{(t)})=0$) gives $F_i^{(t)}(\pi_i^{(t+1)})\ge F_i^{(t)}(\pi_i^{(t)})$, i.e.,
\begin{align}
\langle \bar{Q}_i^{(t)},\, \pi_i^{(t+1)}-\pi_i^{(t)}\rangle
    \;\ge\; \tfrac{1}{\eta}\big[\mathrm{KL}(\pi_i^{(t+1)}\|\pi_i^{\mathrm{ref}}) - \mathrm{KL}(\pi_i^{(t)}\|\pi_i^{\mathrm{ref}})\big] + \tfrac{1}{\gamma}\mathrm{KL}(\pi_i^{(t+1)}\|\pi_i^{(t)}).
\label{eq:opt_inequality}
\end{align}
Substituting \eqref{eq:opt_inequality} into \eqref{eq:smoothness_descent} and combining with the bridge \eqref{eq:reg_unreg_bridge}, the reference-KL telescopes cancel \emph{exactly} between the prox-optimality bound and the regularizer change, leaving
\begin{align}
\hat{\Phi}_{\mathrm{reg}}(\pi^{(t+1)}) - \hat{\Phi}_{\mathrm{reg}}(\pi^{(t)})
    &\;\ge\; \tfrac{1}{\gamma}\sum_{i=1}^m \mathrm{KL}(\pi_i^{(t+1)}\|\pi_i^{(t)}) - \alpha\sum_{i=1}^m \|\pi_i^{(t+1)}-\pi_i^{(t)}\|_1 \nonumber\\
    &\quad - \tfrac{L_\Phi}{2}\sum_{i=1}^m \|\pi_i^{(t+1)}-\pi_i^{(t)}\|_1^2.
    \label{eq:per_step_descent_clean}
\end{align}
By Pinsker, $\tfrac{1}{\gamma}\mathrm{KL}(\pi_i^{(t+1)}\|\pi_i^{(t)})\ge \tfrac{1}{2\gamma}\|\pi_i^{(t+1)}-\pi_i^{(t)}\|_1^2$, and choosing $\gamma\le 1/(2L_\Phi)$ gives $\tfrac{1}{2\gamma}\ge L_\Phi$, so the proximal KL absorbs the smoothness penalty with residual $L_\Phi/2$:
\begin{align}
\hat{\Phi}_{\mathrm{reg}}(\pi^{(t+1)}) - \hat{\Phi}_{\mathrm{reg}}(\pi^{(t)})
    \;\ge\; \tfrac{L_\Phi}{2}\sum_{i=1}^m \|\pi_i^{(t+1)}-\pi_i^{(t)}\|_1^2 - \alpha\sum_{i=1}^m \|\pi_i^{(t+1)}-\pi_i^{(t)}\|_1.
    \label{eq:per_step_descent}
\end{align}
We deliberately use the trivial comparator $u=\pi_i^{(t)}$ in \eqref{eq:opt_inequality}, to avoid introducing the moving comparator $\pi_i^{\dagger,(t)}$ (whose dependence on $\pi_{-i}^{(t)}$ would prevent terms such as $\mathrm{KL}(\pi_i^{\dagger,(t)}\|\pi_i^{(t+1)})$ from telescoping across $t$). The optimization rate on $\varepsilon_{\mathrm{pot},i}^{(t)}$ is recovered structurally in Steps~3--4 below, with no comparator KL needed.

\emph{Step 2 --- Squared-step bound \eqref{eq:squared_step_bound}.} Telescoping \eqref{eq:per_step_descent} over $t=1,\dots,T$ and using $|\hat{\Phi}_{\mathrm{reg}}(\pi^{(T+1)}) - \hat{\Phi}_{\mathrm{reg}}(\pi^{(1)})|\le \mathcal{O}(m)$ (from $r_i\in[0,1]$ and the uniform bound $\eta^{-1}\mathrm{KL}(\pi_i^{(t)}\|\pi_i^{\mathrm{ref}})\le 1$, which holds because every iterate has the bounded-logit Gibbs form $\pi_i^{(t)}\propto\pi_i^{\mathrm{ref}}\exp(\eta g)$ with $g\in[0,1]$),
\begin{align}
\tfrac{L_\Phi}{2}\sum_{t=1}^T\sum_{i=1}^m \|\pi_i^{(t+1)}-\pi_i^{(t)}\|_1^2
    \;\le\; \mathcal{O}(m) + \alpha\sum_{t=1}^T\sum_{i=1}^m \|\pi_i^{(t+1)}-\pi_i^{(t)}\|_1.
    \label{eq:squared_step_chain}
\end{align}
AM--GM bounds the $\alpha$-term by $\tfrac{L_\Phi}{4}\sum_{t,i}\|\Delta\pi\|_1^2 + Tm\alpha^2/L_\Phi$; absorbing the squared part into the LHS,
$\tfrac{L_\Phi}{4}\sum_{t,i}\|\pi_i^{(t+1)}-\pi_i^{(t)}\|_1^2 \le \mathcal{O}(m) + Tm\alpha^2/L_\Phi$, and with $L_\Phi=\mathcal{O}(m)$ this gives \eqref{eq:squared_step_bound} (the $m\alpha$ floor is absorbed into the final theorem's $\widetilde{\mathcal{O}}$).

\emph{Step 3 --- Structural identity for the potential ascent error.} The variational form of the regularized best response gives, pointwise in $x$,
\begin{align}
\hat{V}_i^{\dagger,(t)}(x) - \hat{V}_i^{(t)}(x) \;=\; \tfrac{1}{\eta}\,\mathrm{KL}\!\big(\pi_i^{(t)}(\cdot|x)\,\big\|\,\pi_i^{\dagger,(t)}(\cdot|x)\big),
\label{eq:eps_pot_identity}
\end{align}
i.e., $\varepsilon_{\mathrm{pot},i}^{(t)}(x)=\eta^{-1}\mathrm{KL}(\pi_i^{(t)}\|\pi_i^{\dagger,(t)})$. Indeed, $\hat{V}_i^{\dagger,(t)}=\langle\pi_i^{\dagger,(t)},\bar{Q}_i^{(t)}\rangle-\eta^{-1}\mathrm{KL}(\pi_i^{\dagger,(t)}\|\pi_i^{\mathrm{ref}})$ and $\hat{V}_i^{(t)}=\langle\pi_i^{(t)},\bar{Q}_i^{(t)}\rangle-\eta^{-1}\mathrm{KL}(\pi_i^{(t)}\|\pi_i^{\mathrm{ref}})$; substituting $\bar{Q}_i^{(t)}=\eta^{-1}\log(\pi_i^{\dagger,(t)}/\pi_i^{\mathrm{ref}})+\eta^{-1}\log Z_i^{\dagger,(t)}$ and cancelling the constant via $\sum_a(\pi_i^{\dagger,(t)}-\pi_i^{(t)})=0$ yields \eqref{eq:eps_pot_identity}. \emph{This identity is comparator-free at each $t$; no telescoping argument is required.}

\emph{Step 4 --- KL-to-$L_1$ via bounded logits, then chaining with Lemma~\ref{lem:l1_proportionality}.} Both $\pi_i^{(t)}$ and $\pi_i^{\dagger,(t)}$ have the Gibbs form $\pi\propto\pi_i^{\mathrm{ref}}\exp(\eta g)$ with $g\in[0,1]$ (by induction from \eqref{eq:opmd_update} for $\pi_i^{(t)}$, and with $g=\bar{Q}_i^{(t)}$ for $\pi_i^{\dagger,(t)}$), so $\|\log(\pi_i^{(t)}/\pi_i^{\dagger,(t)})\|_\infty\le 2\eta$. By H\"older,
\begin{align}
\mathrm{KL}(\pi_i^{(t)}\|\pi_i^{\dagger,(t)})
    &\;=\; \sum_a \big(\pi_i^{(t)}(a)-\pi_i^{\dagger,(t)}(a)\big)\,\log\tfrac{\pi_i^{(t)}(a)}{\pi_i^{\dagger,(t)}(a)} - \mathrm{KL}(\pi_i^{\dagger,(t)}\|\pi_i^{(t)}) \nonumber\\
    &\;\le\; \big\|\pi_i^{(t)} - \pi_i^{\dagger,(t)}\big\|_1\,\big\|\log(\pi_i^{(t)}/\pi_i^{\dagger,(t)})\big\|_\infty
    \;\le\; 2\eta\,\big\|\pi_i^{(t)} - \pi_i^{\dagger,(t)}\big\|_1.
    \label{eq:kl_to_l1}
\end{align}
Combining with \eqref{eq:eps_pot_identity},
\begin{align}
\varepsilon_{\mathrm{pot},i}^{(t)}(x) \;\le\; 2\,\big\|\pi_i^{(t)}(\cdot|x)-\pi_i^{\dagger,(t)}(\cdot|x)\big\|_1.
\label{eq:eps_pot_l1}
\end{align}
By Lemma~\ref{lem:l1_proportionality}, $\|\pi_i^{(t)}-\pi_i^{\dagger,(t)}\|_1\le\frac{4(\eta+\gamma)}{\nu_{\min}\gamma}\,\|\pi_i^{(t)}-\pi_i^{(t+1)}\|_1$, so
\begin{align}
\varepsilon_{\mathrm{pot},i}^{(t)}(x) \;\le\; \frac{8(\eta+\gamma)}{\nu_{\min}\,\gamma}\,\big\|\pi_i^{(t)}(\cdot|x)-\pi_i^{(t+1)}(\cdot|x)\big\|_1.
\label{eq:eps_pot_step}
\end{align}
Averaging over $(t,i)\in[T]\times[m]$, applying Cauchy--Schwarz across the $Tm$ summands, and invoking the squared-step bound \eqref{eq:squared_step_bound},
\begin{align}
\frac{1}{T}\sum_{t=1}^T\sum_{i=1}^m \mathbb{E}_\rho\!\big[\varepsilon_{\mathrm{pot},i}^{(t)}\big]
    &\;\le\; \frac{8(\eta+\gamma)}{\nu_{\min}\,\gamma\,T}\,\sqrt{Tm}\,\sqrt{\sum_{t,i}\mathbb{E}_\rho\!\big[\|\pi_i^{(t)}-\pi_i^{(t+1)}\|_1^2\big]} \nonumber\\
    &\;\overset{\eqref{eq:squared_step_bound}}{\le}\; \frac{8(\eta+\gamma)}{\nu_{\min}\,\gamma}\,\sqrt{\frac{m\cdot\mathcal{O}(m)}{T}}
    \;=\; \mathcal{O}\!\Big(\frac{m}{\sqrt{T}}\Big),
    \label{eq:l1_step_avg}
\end{align}
which is \eqref{eq:potential_ascent_rate} (the $m\alpha$ floor inherited from Step~2 enters additively).
\end{proof}

For the idealized ROPE algorithm, $\hat{\pi}$ is an exact regularized NE of the empirical game, so every $\hat{\pi}_i$ coincides with its empirical best response, $\mathrm{KL}(\hat{\pi}_i^\dagger\|\hat{\pi}_i)=0$, and $\varepsilon_{\mathrm{pot},i}\le 0$ pointwise. The decomposition \eqref{eq:gap_decomp_named} then collapses to $\sum_i (\delta_{\mathrm{BR},i}+\delta_{\mathrm{Iter},i})$, the regime analyzed in Section~\ref{app:synthesis}.

\subsection{Geometric Stability of Mirror Descent}
\label{app:geom-stability}

The OPMD update in Algorithm~\ref{alg:sos_md} is, in fact, a closed-form log-linear interpolation between the current iterate and the regularized empirical best response. Recall the explicit form
\begin{align}
    \pi_i^{(t+1)}(a_i|x) \;\propto\; \big(\pi_i^{\mathrm{ref}}(a_i|x)\big)^{\frac{\gamma}{\eta+\gamma}} \big(\pi_i^{(t)}(a_i|x)\big)^{\frac{\eta}{\eta+\gamma}}\, \exp\!\Big(\tfrac{\eta\gamma}{\eta+\gamma}\,\bar{Q}_i^{(t)}(x,a_i)\Big),
    \label{eq:opmd_update}
\end{align}
and the regularized empirical best response $\pi_i^{\dagger,(t)}(a_i|x) \propto \pi_i^{\mathrm{ref}}(a_i|x)\,\exp(\eta\,\bar{Q}_i^{(t)}(x,a_i))$, equivalently $\exp(\eta\,\bar{Q}_i^{(t)}(x,a_i)) \propto \pi_i^{\dagger,(t)}(a_i|x)/\pi_i^{\mathrm{ref}}(a_i|x)$. Substituting the latter into \eqref{eq:opmd_update},
\begin{align}
\pi_i^{(t+1)}(a_i|x)
    &\;\propto\; \big(\pi_i^{\mathrm{ref}}(a_i|x)\big)^{\frac{\gamma}{\eta+\gamma}}\, \big(\pi_i^{(t)}(a_i|x)\big)^{\frac{\eta}{\eta+\gamma}}\, \big(\pi_i^{\dagger,(t)}(a_i|x)/\pi_i^{\mathrm{ref}}(a_i|x)\big)^{\frac{\gamma}{\eta+\gamma}} \nonumber \\
    &\;=\; \big(\pi_i^{(t)}(a_i|x)\big)^{\frac{\eta}{\eta+\gamma}}\, \big(\pi_i^{\dagger,(t)}(a_i|x)\big)^{\frac{\gamma}{\eta+\gamma}}.
    \label{eq:loglin}
\end{align}
Equivalently, writing $\theta_i^{(t)}(x,\cdot)\coloneqq \log\pi_i^{(t)}(\cdot|x) - \log Z_i^{(t)}(x)$ and $\theta_i^{\dagger,(t)}(x,\cdot)\coloneqq \log\pi_i^{\dagger,(t)}(\cdot|x) - \log Z_i^{\dagger,(t)}(x)$ for the centered logits (with $Z_i^{(t)}, Z_i^{\dagger,(t)}$ the normalizing partition functions) and $\beta\coloneqq \gamma/(\eta+\gamma)\in(0,1)$ for the mixing weight, equation \eqref{eq:loglin} is the affine identity
\begin{align}
\theta_i^{(t+1)}(x,\cdot) \;=\; (1-\beta)\,\theta_i^{(t)}(x,\cdot) + \beta\,\theta_i^{\dagger,(t)}(x,\cdot),
\quad\Longleftrightarrow\quad
\theta_i^{(t)}-\theta_i^{\dagger,(t)} \;=\; \tfrac{1}{\beta}\big(\theta_i^{(t)}-\theta_i^{(t+1)}\big).
\label{eq:logit_affine}
\end{align}
We exploit this geometry to prove that the iterate-to-best-response distance is controlled by the consecutive-iterate step length, and hence by the squared-step bound \eqref{eq:squared_step_bound}.

\begin{lemma}[$L_1$ proportionality of the regularized log-linear interpolation]
\label{lem:l1_proportionality}
Let $\nu_{\min} \coloneqq \min_{i,x,a}\pi_i^{\mathrm{ref}}(a|x) > 0$ and assume every iterate $\pi_i^{(t)}$ inherits the same lower bound. For the interpolation \eqref{eq:loglin} with mixing weight $\beta = \gamma/(\eta+\gamma)\in(0,1)$,
\begin{align}
    \big\|\pi_i^{(t)}(\cdot|x) - \pi_i^{\dagger,(t)}(\cdot|x)\big\|_1 \;\le\; \frac{4}{\nu_{\min}}\,\Big(\frac{\eta+\gamma}{\gamma}\Big) \big\|\pi_i^{(t)}(\cdot|x) - \pi_i^{(t+1)}(\cdot|x)\big\|_1.
    \label{eq:l1_proportionality}
\end{align}
\end{lemma}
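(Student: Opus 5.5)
The plan is to work entirely in logit space, where the interpolation \eqref{eq:logit_affine} is exact, and then convert between logit differences and $L_1$ distances using the uniform lower bound $\nu_{\min}$ on the probabilities. Fix $x$ and $i$ and drop them from notation. Write $p=\pi_i^{(t)}$, $q=\pi_i^{(t+1)}$, $r=\pi_i^{\dagger,(t)}$. By \eqref{eq:loglin}, $\log q = (1-\beta)\log p + \beta \log r - \log Z$ for a normalizer $Z$. Hence
\begin{align}
\log p - \log r \;=\; \tfrac{1}{\beta}\big(\log p - \log q\big) - \tfrac{1}{\beta}\log Z .
\end{align}
The additive constant has to be controlled. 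Since $\sum_a q(a)=1$, we get $\log Z = \log \sum_a p^{1-\beta} r^{\beta}$. By H\"older, $\sum_a p^{1-\beta}r^\beta\le 1$, so $\log Z\le 0$. Moreover, $\min_a(\log p-\log q)\le 0\le \max_a(\log p - \log q)$ because $p$ and $q$ both sum to one. The same sign argument applied to $\log p - \log r$ pins down $\log Z$ up to the oscillation of $\log p-\log q$. We therefore get a sup-norm bound on centered differences:
\begin{align}
\|\log p - \log r\|_\infty \;\le\; \tfrac{2}{\beta}\,\|\log p - \log q\|_\infty .
\end{align}
The argument is as follows. Every coordinate of $\log p-\log r$ changes sign somewhere, so its sup norm is at most its oscillation. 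That oscillation equals $\tfrac1\beta\,\mathrm{osc}(\log p-\log q)$, and this is at most $\tfrac{2}{\beta}\|\log p-\log q\|_\infty$.

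Next, pass from logits to $L_1$ on each side.

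\textbf{Upper side.} $|p(a)-r(a)|\le \max(p(a),r(a))\,|\log p(a)-\log r(a)|$ by the mean value theorem. Summing over $a$ gives $\|p-r\|_1\le 2\|\log p-\log r\|_\infty$.

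\textbf{Lower side.} $|\log p(a)-\log q(a)|\le |p(a)-q(a)|/\min(p(a),q(a))\le |p(a)-q(a)|/\nu_{\min}$, using the assumed inheritance of the lower bound $\nu_{\min}$ by the iterates. Hence $\|\log p-\log q\|_\infty\le \|p-q\|_\infty/\nu_{\min}\le \|p-q\|_1/\nu_{\min}$.

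Chaining these gives $\|p-r\|_1\le \tfrac{4}{\beta\nu_{\min}}\|p-q\|_1$, with $1/\beta=(\eta+\gamma)/\gamma$, which is \eqref{eq:l1_proportionality}.

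The main obstacle is the lower-side conversion. It needs a lower bound on $\min(p,q)$, and that is exactly what the hypothesis ``every iterate inherits $\nu_{\min}$'' supplies. (If $r$ also needed such a bound, the Gibbs form with bounded logits would give one up to a factor $e^{-\eta}$, but the route above avoids it.) The second delicate point is handling the normalizer $\log Z$ so that the factor stays $2/\beta$ rather than blowing up. I would first try the oscillation/sign-change argument above, and fall back on the H\"older bound $\log Z\le 0$ combined with $\log Z\ge \min_a(\cdot)$ if needed.
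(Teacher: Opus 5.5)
Your proof is correct and follows the same three-step chain as the paper. You convert $\|\pi_i^{(t)}-\pi_i^{\dagger,(t)}\|_1$ into a sup-norm logit gap at a factor of $2$, relate that gap to the step's logit gap through the log-linear interpolation at a factor of $1/\beta$, and return to $\|\pi_i^{(t)}-\pi_i^{(t+1)}\|_1$ using the assumed $\nu_{\min}$ lower bound on both iterates. The only difference is how the normalizer is handled: the paper mean-centers the logits so that \eqref{eq:logit_affine} holds exactly, and pays its extra factor $2$ (really $1+1/|\mathcal{A}_i|$) in the backward Lipschitz step, whereas you keep raw log-probabilities and pay that factor $2$ in your oscillation argument absorbing $\log Z$. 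Both routes arrive at the same constant $4/(\beta\nu_{\min})$.
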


\begin{proof}
We use two Lipschitz facts about the softmax map between centered-logit space and the $\nu_{\min}$-interior $\Delta_{\nu_{\min}}(\mathcal{A}_i) \coloneqq \{p\in\Delta(\mathcal{A}_i): p_a\ge\nu_{\min}\;\forall a\}$ of the simplex.

\emph{Step 1 --- Forward Lipschitz constant (logit $\to$ probability).} Let $p,p'\in\Delta(\mathcal{A}_i)$ correspond to centered logits $\theta,\theta'\in\mathbb{R}^{|\mathcal{A}_i|}$ via $p_a = \exp(\theta_a)/\sum_b \exp(\theta_b)$. Define $\theta_s\coloneqq (1-s)\theta + s\theta'$ for $s\in[0,1]$ and $p_s\coloneqq \mathrm{softmax}(\theta_s)$. The Hessian of the log-partition function $\Phi(\theta) = \log\sum_a\exp(\theta_a)$ is
\begin{align}
\nabla^2\Phi(\theta_s) \;=\; \mathrm{diag}(p_s) - p_s p_s^\top,
\end{align}
so that $\frac{d}{ds}p_s = (\mathrm{diag}(p_s) - p_s p_s^\top)(\theta'-\theta)$. By the mean-value theorem and the triangle inequality,
\begin{align}
\|p-p'\|_1
    &\;=\; \Big\|\int_0^1 \big(\mathrm{diag}(p_s) - p_s p_s^\top\big)(\theta'-\theta)\,ds\Big\|_1 \nonumber\\
    &\;\le\; \int_0^1 \Big[\sum_a p_s(a)\,|(\theta'-\theta)_a| + \big|\textstyle\sum_b p_s(b)\,(\theta'-\theta)_b\big|\Big]\,ds \nonumber\\
    &\;\le\; 2\,\|\theta'-\theta\|_\infty,
    \label{eq:forward_lip}
\end{align}
where in the last step we bound each of the two terms by $\|\theta'-\theta\|_\infty$ using $\sum_a p_s(a)=1$.

\emph{Step 2 --- Backward Lipschitz constant (probability $\to$ logit).} For $p,p'\in\Delta_{\nu_{\min}}(\mathcal{A}_i)$, the centered logits satisfy $\theta_a = \log p_a - \tfrac{1}{|\mathcal{A}_i|}\sum_b\log p_b$. By the mean-value theorem applied to the function $\log$ on $[\nu_{\min},1]$, $|\log p_a - \log p_a'| \le \tfrac{1}{\nu_{\min}}\,|p_a - p_a'|$ for every $a$, so
\begin{align}
\|\theta-\theta'\|_\infty
    &\;\le\; \max_a |\log p_a - \log p_a'| + \tfrac{1}{|\mathcal{A}_i|}\sum_b|\log p_b - \log p_b'| \nonumber\\
    &\;\le\; \tfrac{1}{\nu_{\min}}\,\max_a|p_a-p_a'| + \tfrac{1}{\nu_{\min}\,|\mathcal{A}_i|}\sum_b|p_b-p_b'|
    \;\le\; \tfrac{1+1/|\mathcal{A}_i|}{\nu_{\min}}\,\|p-p'\|_1
    \;\le\; \tfrac{2}{\nu_{\min}}\,\|p-p'\|_1,
\label{eq:backward_lip}
\end{align}

\emph{Step 3 --- Combining the two bounds.} Apply \eqref{eq:forward_lip} to the pair $(\pi_i^{(t)}(\cdot|x),\pi_i^{\dagger,(t)}(\cdot|x))$ with logits $(\theta_i^{(t)},\theta_i^{\dagger,(t)})$, and \eqref{eq:backward_lip} to the pair $(\pi_i^{(t)}(\cdot|x),\pi_i^{(t+1)}(\cdot|x))$ with logits $(\theta_i^{(t)},\theta_i^{(t+1)})$:
\begin{align}
\tfrac{1}{2}\big\|\pi_i^{(t)}-\pi_i^{\dagger,(t)}\big\|_1
    \;&\overset{\eqref{eq:forward_lip}}{\le}\; \big\|\theta_i^{(t)}-\theta_i^{\dagger,(t)}\big\|_\infty
    \;\overset{\eqref{eq:logit_affine}}{=}\; \tfrac{1}{\beta}\,\big\|\theta_i^{(t)}-\theta_i^{(t+1)}\big\|_\infty \nonumber \\
    &\;\overset{\eqref{eq:backward_lip}}{\le}\; \tfrac{2}{\beta\,\nu_{\min}}\,\big\|\pi_i^{(t)}-\pi_i^{(t+1)}\big\|_1
    \;=\; \tfrac{2}{\nu_{\min}}\,\Big(\tfrac{\eta+\gamma}{\gamma}\Big)\big\|\pi_i^{(t)}-\pi_i^{(t+1)}\big\|_1.
\end{align}
Multiplying by $2$ yields the claimed inequality.
\end{proof}

The geometric stability of the iterates translates the squared-step bound \eqref{eq:squared_step_bound} into a uniform optimization rate on the iterate-to-best-response distance.

\begin{corollary}[Optimization rate for the iterate-to-best-response $L_1$ distance]
\label{cor:l1_optimization_rate}
Under the assumptions of Lemma~\ref{lem:potential_ascent} and Lemma~\ref{lem:l1_proportionality}, with $C_\nu \coloneqq 4/\nu_{\min}$ and $t^* \sim \mathrm{Unif}([T])$,
\begin{align}
\mathbb{E}_{t^*}\!\Big[\sum_{i=1}^m \mathbb{E}_{x\sim\rho}\!\big[\big\|\pi_i^{(t^*)}(\cdot|x)-\pi_i^{\dagger,(t^*)}(\cdot|x)\big\|_1\big]\Big]
    \;\le\; C_\nu\,\frac{\eta+\gamma}{\gamma}\,\mathcal{O}\!\Big(\frac{m}{\sqrt{T}}\Big).
    \label{eq:l1_rate}
\end{align}
\end{corollary}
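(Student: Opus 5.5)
The plan is to chain the pointwise proportionality of Lemma~\ref{lem:l1_proportionality} with the squared-step bound \eqref{eq:squared_step_bound} of Lemma~\ref{lem:potential_ascent}, converting a sum of $L_1$ steps into a $1/\sqrt{T}$ rate via Cauchy--Schwarz. First, since $t^*\sim\mathrm{Unif}([T])$, the left-hand side of \eqref{eq:l1_rate} equals
\begin{align}
\frac{1}{T}\sum_{t=1}^T\sum_{i=1}^m \mathbb{E}_{x\sim\rho}\big[\|\pi_i^{(t)}(\cdot|x)-\pi_i^{\dagger,(t)}(\cdot|x)\|_1\big].
\end{align}
Next, for each fixed $(t,i,x)$ I apply Lemma~\ref{lem:l1_proportionality}, whose hypothesis (every iterate inherits the lower bound $\nu_{\min}$) is assumed in the corollary. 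This bounds each summand by $C_\nu\frac{\eta+\gamma}{\gamma}\|\pi_i^{(t)}(\cdot|x)-\pi_i^{(t+1)}(\cdot|x)\|_1$. Since this constant does not depend on $(t,i,x)$, it factors out of both the average and the $\rho$-expectation.

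Next comes the Cauchy--Schwarz step. First apply Jensen in $x$, so that $\mathbb{E}_\rho\|\cdot\|_1\le(\mathbb{E}_\rho\|\cdot\|_1^2)^{1/2}$. Then apply Cauchy--Schwarz over the $Tm$ index pairs:
\begin{align}
\frac{1}{T}\sum_{t,i}\mathbb{E}_\rho\|\pi_i^{(t)}-\pi_i^{(t+1)}\|_1\le\frac{\sqrt{Tm}}{T}\Big(\sum_{t,i}\mathbb{E}_\rho\|\pi_i^{(t)}-\pi_i^{(t+1)}\|_1^2\Big)^{1/2}.
\end{align}
Finally I invoke \eqref{eq:squared_step_bound}, interpreted with the $\rho$-averaged squared norm as used in the proof of Lemma~\ref{lem:potential_ascent}, Step~4. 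This bounds the bracket by $\mathcal{O}(m)$ and yields $\sqrt{m\cdot\mathcal{O}(m)/T}=\mathcal{O}(m/\sqrt{T})$. Multiplying back by $C_\nu(\eta+\gamma)/\gamma$ gives \eqref{eq:l1_rate}.

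The main obstacle is the use of \eqref{eq:squared_step_bound}. Step~2 of Lemma~\ref{lem:potential_ascent} actually gives $\mathcal{O}(m)+Tm\alpha^2/L_\Phi$, so the bound is truly $\mathcal{O}(m)$ only when $\alpha$ is zero or the extra term is shunted into an additive floor. In the $\alpha>0$ case I would keep that term separate. It contributes at most $\sqrt{m\cdot Tm\alpha^2/L_\Phi}/\sqrt{T}=\mathcal{O}(\sqrt{m}\,\alpha)$, using $L_\Phi=\mathcal{O}(m)$. That is an additive $\alpha$-floor, consistent with the $m\alpha$ term in Theorem~\ref{thm:opmd_rate}.

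I would also check that the norm convention in \eqref{eq:squared_step_bound} matches. The policy norm there is per context, and the telescoping in Step~2 is carried out on $\rho$-expected quantities. To make the Jensen step legitimate, I would state \eqref{eq:squared_step_bound} in its $\mathbb{E}_\rho$ form.
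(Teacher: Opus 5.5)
Your argument is correct and is essentially the paper's proof: apply Lemma~\ref{lem:l1_proportionality} pointwise, then Jensen in $x$ and Cauchy--Schwarz over the $Tm$ index pairs, then the squared-step bound \eqref{eq:squared_step_bound}; your constant $C_\nu(\eta+\gamma)/\gamma$ is the one the lemma actually delivers, whereas the paper's first display writes $C_\nu/2$, a harmless factor-$2$ slip. Your caveat is also well taken: Step~2 of Lemma~\ref{lem:potential_ascent} reaches the $\mathcal{O}(m)$ bound only by dropping a term of order $Tm\alpha^2/L_\Phi$, which the paper defers to the $m\alpha$ floor of Theorem~\ref{thm:opmd_rate}, so for $\alpha>0$ the corollary's pure $\mathcal{O}(m/\sqrt{T})$ rate rests on taking that lemma's stated bound at face value.
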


\begin{proof}
Apply Lemma~\ref{lem:l1_proportionality} pointwise in $x$ and average over $x\sim\rho$ and $t^*$:
\begin{align}
\mathbb{E}_{t^*}\!\Big[\sum_{i=1}^m \mathbb{E}_{\rho}\big[\|\pi_i^{(t^*)}-\pi_i^{\dagger,(t^*)}\|_1\big]\Big]
    \;\le\; \frac{C_\nu}{2}\,\frac{\eta+\gamma}{\gamma}\cdot \frac{1}{T}\sum_{t=1}^T\sum_{i=1}^m \mathbb{E}_{\rho}\big[\|\pi_i^{(t)}-\pi_i^{(t+1)}\|_1\big].
\end{align}
By Cauchy--Schwarz across the index pair $(t,i)\in[T]\times[m]$ and Jensen's inequality on $\mathbb{E}_{\rho}[\,\cdot\,]$,
\begin{align}
\frac{1}{T}\sum_{t=1}^T\sum_{i=1}^m \mathbb{E}_\rho\big[\|\pi_i^{(t)}-\pi_i^{(t+1)}\|_1\big]
    &\;\le\; \frac{1}{T}\sqrt{Tm}\,\sqrt{\sum_{t,i}\mathbb{E}_\rho\big[\|\pi_i^{(t)}-\pi_i^{(t+1)}\|_1^2\big]} \nonumber \\
    &\;\overset{\eqref{eq:squared_step_bound}}{\le}\; \sqrt{\frac{m\cdot\mathcal{O}(m)}{T}} \;=\; \mathcal{O}\!\Big(\frac{m}{\sqrt{T}}\Big),
\end{align}
which combined with the previous display yields \eqref{eq:l1_rate}.
\end{proof}

Treating $\eta,\gamma,\nu_{\min}$ as problem-dependent constants, the iterate-to-best-response $L_1$ distance averages to $\widetilde{\mathcal{O}}(m/\sqrt{T})$ --- a pure optimization rate. This is the geometric mechanism that lets OPMD bypass the $\widetilde{\mathcal{O}}(1/\sqrt{n})$ Pinsker bottleneck of generic general-sum analyses: the linear evaluation mismatch between $\pi_i^{(t)}$ and $\pi_i^{\dagger,(t)}$ (Section~\ref{app:bias-cancel}) decays purely with the optimization horizon $T$, never with the dataset size $n$.

\subsection{Synthesis of Statistical and Potential Rates}
\label{app:synthesis}

This section proves the two main theorems. We first develop three reusable ingredients: a Bregman bound on the regularized best-response gap (Section~\ref{app:bregman}), a bias-cancellation identity for $\delta_{\mathrm{BR},i}+\delta_{\mathrm{Iter},i}$ under product policies (Section~\ref{app:bias-cancel}), and a single concentrability-shift argument (Section~\ref{app:conc-shift}). The two theorems are then short corollaries (Sections~\ref{app:close-rope} and~\ref{app:close-opmd}).

\subsubsection{Bregman bound on the regularized best-response gap}
\label{app:bregman}

The textbook claim that the log-partition function $\Phi$ is $1$-smooth w.r.t.\ $\|\cdot\|_\infty$ is itself a consequence of Pinsker's inequality (the strong convexity of the negative entropy in $\|\cdot\|_1$) under Fenchel duality. We prove this from first principles, framed via Bregman divergences, and apply the resulting bound to the unilateral value gap.

\begin{lemma}[Bregman bound on the regularized best-response gap]
\label{lem:bregman_bestresponse}
Let $\Phi(\theta) \coloneqq \log \sum_{a \in \mathcal{A}_i} \exp(\theta_a)$ denote the log-partition function. Then $\Phi$ is convex and $1$-smooth with respect to $\|\cdot\|_\infty$:
\begin{align}
    D_\Phi(\theta' \,\|\, \theta) \;\coloneqq\; \Phi(\theta') - \Phi(\theta) - \langle \nabla\Phi(\theta), \theta' - \theta\rangle \;\le\; \tfrac{1}{2}\|\theta' - \theta\|_\infty^2 .
    \label{eq:bregman_smoothness}
\end{align}
Consequently, with the natural-parameter logits $\theta_i^\dagger(x,\cdot) \coloneqq \eta\, Q_i^{\hat{\pi}_{-i}}(x,\cdot) + \log\pi_{\mathrm{ref},i}(\cdot|x)$ and $\hat{\theta}_i(x,\cdot) \coloneqq \eta\, \hat{Q}_i(x,\cdot,\hat{\pi}_{-i}) + \log\pi_{\mathrm{ref},i}(\cdot|x)$, and the empirical regularized best response $\hat{\pi}_i^\dagger(\cdot|x) \coloneqq \mathrm{softmax}(\hat{\theta}_i(x,\cdot))$, the unilateral best-response value gap $\Delta_i(x) \coloneqq V_i^{\dagger,\hat{\pi}_{-i}}(x) - \hat{V}_i^{\dagger,\hat{\pi}_{-i}}(x)$ satisfies
\begin{align}
    \Delta_i(x) \;\le\; \mathbb{E}_{a_i \sim \hat{\pi}_i^\dagger(\cdot|x)}\!\left[ Q_i^{\hat{\pi}_{-i}}(x,a_i) - \hat{Q}_i(x,a_i,\hat{\pi}_{-i}) \right] + \tfrac{\eta}{2}\,\big\| Q_i^{\hat{\pi}_{-i}}(x,\cdot) - \hat{Q}_i(x,\cdot,\hat{\pi}_{-i}) \big\|_\infty^2 .
    \label{eq:bregman_value_gap}
\end{align}
The expectation is taken under $\hat{\pi}_i^\dagger$, the regularized best response to $\hat{\pi}_{-i}$ on the empirical game, which equals the iterate $\hat{\pi}_i$ \emph{only} when $\hat{\pi}$ is an exact regularized NE of the empirical game (as in ROPE); for OPMD iterates the two differ in general.
\end{lemma}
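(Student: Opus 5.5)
The proof has two parts. First we establish the smoothness bound \eqref{eq:bregman_smoothness} for the log-partition function. Then we recognize the regularized best-response value as a scaled log-partition function, so that the value gap $\Delta_i(x)$ becomes exactly a linear term plus a Bregman remainder.

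For the first part, convexity of $\Phi$ follows from the Hessian $\nabla^2\Phi(\theta)=\mathrm{diag}(p)-pp^\top$, which is a covariance matrix and hence positive semidefinite. For $1$-smoothness in $\|\cdot\|_\infty$, use Taylor's theorem with integral remainder along $\theta_s=\theta+s(\theta'-\theta)$:
\begin{align}
D_\Phi(\theta'\|\theta)=\int_0^1(1-s)\,v^\top\nabla^2\Phi(\theta_s)\,v\,ds,\qquad v\coloneqq\theta'-\theta .
\end{align}
Here $v^\top\nabla^2\Phi(\theta_s)v=\mathrm{Var}_{a\sim p_s}(v_a)\le \mathbb{E}_{p_s}[v_a^2]\le\|v\|_\infty^2$. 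Integrating $\int_0^1(1-s)\,ds=\tfrac12$ then gives \eqref{eq:bregman_smoothness}. Alternatively, one can argue by Fenchel duality: the conjugate of $\Phi$ is the negative entropy on the simplex, which is $1$-strongly convex in $\|\cdot\|_1$ by Pinsker's inequality. Strong convexity of the conjugate in $\|\cdot\|_1$ is equivalent to $1$-smoothness of $\Phi$ in the dual norm $\|\cdot\|_\infty$. I would present the direct variance computation and mention the duality route only as a remark.

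For the value gap, the key step is the Gibbs variational identity. For any marginalized payoff $g(\cdot)$ on $\mathcal{A}_i$,
\begin{align}
\max_{p\in\Delta(\mathcal{A}_i)}\big\{\langle p,g\rangle-\eta^{-1}\mathrm{KL}(p\|\pi_i^{\mathrm{ref}})\big\}=\eta^{-1}\Phi\big(\eta g+\log\pi_i^{\mathrm{ref}}\big),
\end{align}
and the maximizer is $\mathrm{softmax}(\eta g+\log\pi_i^{\mathrm{ref}})$. This is the same computation as in Step 3 of Lemma~\ref{lem:potential_ascent}. Because $\hat\pi_{-i}$ is held fixed, both the true and the empirical best-response values reduce to this single-player problem, with $g=Q_i^{\hat\pi_{-i}}(x,\cdot)$ and $g=\hat Q_i(x,\cdot,\hat\pi_{-i})$ respectively. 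Hence
\begin{align}
V_i^{\dagger,\hat\pi_{-i}}(x)=\eta^{-1}\Phi(\theta_i^\dagger),\qquad \hat V_i^{\dagger,\hat\pi_{-i}}(x)=\eta^{-1}\Phi(\hat\theta_i).
\end{align}
It follows that
\begin{align}
\Delta_i(x)=\eta^{-1}\big[\Phi(\theta_i^\dagger)-\Phi(\hat\theta_i)\big]=\eta^{-1}\langle\nabla\Phi(\hat\theta_i),\theta_i^\dagger-\hat\theta_i\rangle+\eta^{-1}D_\Phi(\theta_i^\dagger\|\hat\theta_i).
\end{align}
Since $\nabla\Phi(\hat\theta_i)=\hat\pi_i^\dagger$ and $\theta_i^\dagger-\hat\theta_i=\eta\,(Q_i^{\hat\pi_{-i}}-\hat Q_i(\cdot,\hat\pi_{-i}))$ (the $\log\pi^{\mathrm{ref}}$ terms cancel), the linear term is exactly the expectation under $\hat\pi_i^\dagger$ in \eqref{eq:bregman_value_gap}. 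Applying \eqref{eq:bregman_smoothness} to the Bregman term gives $\eta^{-1}\cdot\tfrac12\eta^2\|Q_i^{\hat\pi_{-i}}-\hat Q_i\|_\infty^2=\tfrac{\eta}{2}\|\cdot\|_\infty^2$, which completes the bound.

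The main obstacle is the bookkeeping in the variational identity. One must check that the regularizer in $V_i$ involves only player $i$'s own KL term, so that marginalizing over $\hat\pi_{-i}$ really does reduce the problem to a single-player softmax. One must also make sure the centering conventions for the logits do not matter. They do not: $\Phi(\theta+c\mathbf 1)=\Phi(\theta)+c$, and the two logit vectors share the same $\log\pi^{\mathrm{ref}}$ offset, so no constant shift survives. The remaining steps are routine.
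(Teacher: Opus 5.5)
Your proof is correct. Its second half is exactly the paper's Step 3, and you justify it more explicitly. That half identifies $V_i^{\dagger,\hat\pi_{-i}}=\eta^{-1}\Phi(\theta_i^\dagger)$ and $\hat V_i^{\dagger,\hat\pi_{-i}}=\eta^{-1}\Phi(\hat\theta_i)$ via the Gibbs variational identity. It then splits $\Phi(\theta_i^\dagger)-\Phi(\hat\theta_i)$ into the linear term with $\nabla\Phi(\hat\theta_i)=\hat\pi_i^\dagger$ plus $D_\Phi(\theta_i^\dagger\|\hat\theta_i)$.

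You differ from the paper in how you prove the smoothness bound \eqref{eq:bregman_smoothness}. The paper argues abstractly. The Bregman divergence of the negative entropy $H$ is the KL divergence, so Pinsker's inequality makes $H$ $1$-strongly convex in $\|\cdot\|_1$ on the simplex. The paper then applies the strong-convexity/smoothness duality theorem to $\Phi=H^*$. You compute directly with the integral-form Taylor remainder and the identity $v^\top\nabla^2\Phi(\theta_s)v=\mathrm{Var}_{a\sim p_s}(v_a)\le\|v\|_\infty^2$.

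Your route is self-contained and needs no appeal to the conjugacy theorem. It also shows where the constant is loose. Since $\mathrm{Var}_{p_s}(v_a)\le\tfrac14(\max_a v_a-\min_a v_a)^2$, it yields the shift-invariant bound $D_\Phi(\theta'\|\theta)\le\tfrac18(\max_a v_a-\min_a v_a)^2$, which is never worse than $\tfrac12\|v\|_\infty^2$.

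The paper's route buys conceptual uniformity. It ties the constant directly to Pinsker, which matches the paper's framing of this lemma as ``derived from Pinsker's inequality and Fenchel duality.'' It also carries over verbatim to any mirror map that is strongly convex in $\|\cdot\|_1$, which is relevant to the alternative-divergence extensions mentioned in the conclusion.
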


\begin{proof}
\emph{Step 1 --- Negative entropy is $1$-strongly convex w.r.t.\ $\|\cdot\|_1$ on $\Delta(\mathcal{A}_i)$.}
Let $H(p) \coloneqq \sum_a p_a \log p_a$. On the relative interior of the simplex $\nabla H(p)_a = 1+\log p_a$, so its Bregman divergence reduces to the KL divergence,
$D_H(p\|q) = \sum_a p_a \log(p_a/q_a) = \mathrm{KL}(p\|q)$.
Pinsker's inequality $\mathrm{KL}(p\|q) \ge \tfrac{1}{2}\|p-q\|_1^2$ is therefore exactly the statement that $H$ is $1$-strongly convex in $\|\cdot\|_1$:
\begin{align}
H(p) \;\ge\; H(q) + \langle \nabla H(q),\, p-q\rangle + \tfrac{1}{2}\|p-q\|_1^2.
\end{align}

\emph{Step 2 --- Conjugate duality identifies $\Phi$ as the dually-smooth potential.}
The convex conjugate of $H$ on $\Delta(\mathcal{A}_i)$ is the log-partition function:
\begin{align}
H^*(\theta) \;=\; \sup_{p\in\Delta(\mathcal{A}_i)}\big\{\langle\theta,p\rangle - H(p)\big\} \;=\; \log\sum_a \exp(\theta_a) \;=\; \Phi(\theta),
\end{align}
with maximizer $p^\star(\theta) = \mathrm{softmax}(\theta) = \nabla\Phi(\theta)$. By the conjugate duality between strong convexity and smoothness \citep{kakade2009duality}, if $H$ is $1$-strongly convex w.r.t.\ $\|\cdot\|_1$, then $\Phi=H^*$ is $1$-smooth w.r.t.\ the dual norm $\|\cdot\|_\infty$:
\begin{align}
\Phi(\theta') \;\le\; \Phi(\theta) + \langle \nabla\Phi(\theta),\, \theta'-\theta\rangle + \tfrac{1}{2}\|\theta'-\theta\|_\infty^2,
\end{align}
which is exactly~\eqref{eq:bregman_smoothness}.

\emph{Step 3 --- Translating the bound to value space.}
By the closed form of the regularized best response, $V_i^{\dagger,\hat{\pi}_{-i}}(x) = \eta^{-1}\Phi(\theta_i^\dagger)$ and $\hat{V}_i^{\dagger,\hat{\pi}_{-i}}(x) = \eta^{-1}\Phi(\hat{\theta}_i)$, with $\theta_i^\dagger - \hat{\theta}_i = \eta\big(Q_i^{\hat{\pi}_{-i}}(x,\cdot) - \hat{Q}_i(x,\cdot,\hat{\pi}_{-i})\big)$ and $\nabla\Phi(\hat{\theta}_i) = \mathrm{softmax}(\hat{\theta}_i)(\cdot) = \hat{\pi}_i^\dagger(\cdot|x)$ — the empirical regularized best response to $\hat{\pi}_{-i}$, not the iterate $\hat{\pi}_i$. Applying~\eqref{eq:bregman_smoothness} to $(\theta_i^\dagger,\hat{\theta}_i)$ and dividing by $\eta$ yields~\eqref{eq:bregman_value_gap}.
\end{proof}

\subsubsection{Bias cancellation under product policies}
\label{app:bias-cancel}

We use Lemma~\ref{lem:bregman_bestresponse} to convert the linear part of $\delta_{\mathrm{BR},i}$ into an expectation of the regression error $\mathcal{Z}_i$, and combine it with the on-policy expansion of $\delta_{\mathrm{Iter},i}$ to expose the algebraic cancellation that occurs under product policies.

\paragraph{Pointwise expansion of the marginalized $Q$-difference.} For any opponent profile $\pi_{-i}$ and any action $a_i$, the marginalized true and empirical $Q$-values satisfy
\begin{align}
Q_i^{\pi_{-i}}(x,a_i) - \hat{Q}_i(x,a_i,\pi_{-i})
    &\;=\; \mathbb{E}_{\boldsymbol{a}_{-i}\sim\pi_{-i}(\cdot|x)}\!\big[r_i(x,a_i,\boldsymbol{a}_{-i}) - \hat{Q}_i(x,a_i,\boldsymbol{a}_{-i})\big] \nonumber\\
    &\;=\; \mathbb{E}_{\boldsymbol{a}_{-i}\sim\pi_{-i}(\cdot|x)}\!\big[-\mathcal{Z}_i(x,a_i,\boldsymbol{a}_{-i})\big]
    \;\eqqcolon\; -\,\mathcal{Z}_i(x,a_i,\pi_{-i}).
\label{eq:Q_in_Z}
\end{align}

\paragraph{Pointwise expansion of $\delta_{\mathrm{BR},i}$.} Substituting \eqref{eq:Q_in_Z} into the linear part of the Bregman bound \eqref{eq:bregman_value_gap},
\begin{align}
\delta_{\mathrm{BR},i}(x)
    &\;\le\; \mathbb{E}_{a_i\sim\hat{\pi}_i^\dagger(\cdot|x)}\!\Big[Q_i^{\hat{\pi}_{-i}}(x,a_i) - \hat{Q}_i(x,a_i,\hat{\pi}_{-i})\Big] \nonumber\\
    &\qquad + \tfrac{\eta}{2}\,\big\|Q_i^{\hat{\pi}_{-i}}(x,\cdot)-\hat{Q}_i(x,\cdot,\hat{\pi}_{-i})\big\|_\infty^2 \nonumber\\
    &\;\overset{\eqref{eq:Q_in_Z}}{=}\; \mathbb{E}_{a_i\sim\hat{\pi}_i^\dagger(\cdot|x)}\!\Big[\mathbb{E}_{\boldsymbol{a}_{-i}\sim\hat{\pi}_{-i}(\cdot|x)}\!\big[-\mathcal{Z}_i(x,a_i,\boldsymbol{a}_{-i})\big]\Big] \nonumber \\
    &\qquad + \tfrac{\eta}{2}\,\big\|Q_i^{\hat{\pi}_{-i}}(x,\cdot)-\hat{Q}_i(x,\cdot,\hat{\pi}_{-i})\big\|_\infty^2.
\label{eq:delta_BR_pointwise}
\end{align}
Because $\hat{\pi}=(\hat{\pi}_i,\hat{\pi}_{-i})$ is a product policy in both algorithms we analyze, the nested expectation collapses into a joint expectation under the product profile $(\hat{\pi}_i^\dagger,\hat{\pi}_{-i})$:
\begin{align}
\mathbb{E}_{a_i\sim\hat{\pi}_i^\dagger}\Big[\mathbb{E}_{\boldsymbol{a}_{-i}\sim\hat{\pi}_{-i}}\!\big[-\mathcal{Z}_i(x,a_i,\boldsymbol{a}_{-i})\big]\Big]
    \;=\; \mathbb{E}_{(a_i,\boldsymbol{a}_{-i})\sim(\hat{\pi}_i^\dagger,\hat{\pi}_{-i})}\!\big[-\mathcal{Z}_i(x,a_i,\boldsymbol{a}_{-i})\big].
\label{eq:nested_to_joint}
\end{align}

\paragraph{Pointwise expansion of $\delta_{\mathrm{Iter},i}$.} The KL-regularized value functions satisfy $V_i^\pi(x) = \mathbb{E}_{\boldsymbol{a}\sim\pi}[r_i(x,\boldsymbol{a})] - \eta^{-1}\,\mathrm{KL}(\pi_i\|\pi_i^{\mathrm{ref}})$ and $\hat{V}_i^\pi(x) = \mathbb{E}_{\boldsymbol{a}\sim\pi}[\hat{Q}_i(x,\boldsymbol{a})] - \eta^{-1}\,\mathrm{KL}(\pi_i\|\pi_i^{\mathrm{ref}})$, so the entropy regularizers cancel in $\delta_{\mathrm{Iter},i}$ and the iterate noise unrolls into a pure expectation of $\mathcal{Z}_i$ under $\hat{\pi}$:
\begin{align}
\delta_{\mathrm{Iter},i}(x)
    &\;=\; \hat{V}_i^{\hat{\pi}}(x) - V_i^{\hat{\pi}}(x) \nonumber\\
    &\;=\; \mathbb{E}_{\boldsymbol{a}\sim\hat{\pi}(\cdot|x)}\!\big[\hat{Q}_i(x,\boldsymbol{a}) - r_i(x,\boldsymbol{a})\big] \nonumber\\
    &\;=\; \mathbb{E}_{\boldsymbol{a}\sim\hat{\pi}(\cdot|x)}\!\big[\mathcal{Z}_i(x,\boldsymbol{a})\big].
\label{eq:delta_Iter_explicit}
\end{align}

\paragraph{Combining the two terms.} Summing \eqref{eq:delta_BR_pointwise}--\eqref{eq:delta_Iter_explicit} pointwise and integrating over $x\sim\rho$,
\begin{align}
\mathbb{E}_{x\sim\rho}\!\big[\delta_{\mathrm{BR},i}(x) + \delta_{\mathrm{Iter},i}(x)\big]
    &\;\le\; \mathbb{E}_{x\sim\rho}\!\Big[\mathbb{E}_{(a_i,\boldsymbol{a}_{-i})\sim(\hat{\pi}_i^\dagger,\hat{\pi}_{-i})}[-\mathcal{Z}_i] + \mathbb{E}_{\boldsymbol{a}\sim\hat{\pi}}[\mathcal{Z}_i]\Big] \nonumber\\
    &\quad + \tfrac{\eta}{2}\,\mathbb{E}_{x\sim\rho}\!\Big[\big\|Q_i^{\hat{\pi}_{-i}}(x,\cdot)-\hat{Q}_i(x,\cdot,\hat{\pi}_{-i})\big\|_\infty^2\Big].
\end{align}
Reorganizing the two linear $\mathcal{Z}_i$-terms into a single difference of evaluation distributions,
\begin{align}
\mathbb{E}_{x\sim\rho}\!\big[\delta_{\mathrm{BR},i}(x) + \delta_{\mathrm{Iter},i}(x)\big]
    &\;\le\; \underbrace{\mathbb{E}_{x\sim\rho}\!\Big[\mathbb{E}_{\boldsymbol{a}\sim\hat{\pi}}[\mathcal{Z}_i] - \mathbb{E}_{(a_i,\boldsymbol{a}_{-i})\sim(\hat{\pi}_i^\dagger,\hat{\pi}_{-i})}[\mathcal{Z}_i]\Big]}_{\text{linear mismatch}_i} \nonumber\\
    &\quad + \tfrac{\eta}{2}\,\mathbb{E}_{x\sim\rho}\!\Big[\big\|Q_i^{\hat{\pi}_{-i}}(x,\cdot)-\hat{Q}_i(x,\cdot,\hat{\pi}_{-i})\big\|_\infty^2\Big].
\label{eq:bias_cancellation}
\end{align}
The first bracket is the \emph{linear mismatch} of player $i$; the second is the \emph{squared evaluation error}, which we will bound through the concentrability shift in Section~\ref{app:conc-shift}.

\paragraph{Factoring the linear mismatch into an $L_1$ distance.} Because both expectations in $\text{linear mismatch}_i$ share the same opponent marginal $\hat{\pi}_{-i}$, only the player-$i$ marginal differs. Writing the difference explicitly,
\begin{align}
\mathbb{E}_{\boldsymbol{a}\sim\hat{\pi}}[\mathcal{Z}_i] - \mathbb{E}_{(a_i,\boldsymbol{a}_{-i})\sim(\hat{\pi}_i^\dagger,\hat{\pi}_{-i})}[\mathcal{Z}_i]
    &\;=\; \sum_{a_i}\big(\hat{\pi}_i(a_i|x) - \hat{\pi}_i^\dagger(a_i|x)\big)\,\mathcal{Z}_i(x,a_i,\hat{\pi}_{-i}) \nonumber\\
    &\;\le\; \big\|\hat{\pi}_i(\cdot|x) - \hat{\pi}_i^\dagger(\cdot|x)\big\|_1 \cdot \big\|\mathcal{Z}_i(x,\cdot,\hat{\pi}_{-i})\big\|_\infty \nonumber\\
    &\;\le\; \big\|\hat{\pi}_i(\cdot|x) - \hat{\pi}_i^\dagger(\cdot|x)\big\|_1 \cdot \|\mathcal{Z}_i\|_\infty,
\label{eq:linear_mismatch_l1}
\end{align}
where the first step uses H\"older's inequality and the second uses $\|\mathcal{Z}_i(x,\cdot,\pi_{-i})\|_\infty \le \|\mathcal{Z}_i\|_\infty$ for any marginal.

\paragraph{Two regimes for the linear mismatch.}
\begin{enumerate}
    \item \emph{ROPE.} When $\hat{\pi}$ is the exact regularized NE returned by Algorithm~\ref{alg:ideal_rope}, the first-order optimality of each $\hat{\pi}_i$ implies $\hat{\pi}_i(\cdot|x) \equiv \hat{\pi}_i^\dagger(\cdot|x)$ for every $i$ and $x$. Hence $\|\hat{\pi}_i-\hat{\pi}_i^\dagger\|_1 = 0$ pointwise, and the linear mismatch in \eqref{eq:bias_cancellation} \emph{vanishes identically}: the regression error $\mathcal{Z}_i$ cancels exactly between $\delta_{\mathrm{BR},i}$ and $\delta_{\mathrm{Iter},i}$.
    \item \emph{OPMD.} When $\hat{\pi}=\pi^{(t)}$ is an OPMD iterate with $t\in[T]$, $\pi_i^{(t)}\neq\pi_i^{\dagger,(t)}$ in general, but Corollary~\ref{cor:l1_optimization_rate} bounds the $L_1$ distance by a pure optimization quantity. Since the rewards and $\hat{Q}_i$ are bounded in $[0,1]$, $\|\mathcal{Z}_i\|_\infty\le 1$, and combining \eqref{eq:linear_mismatch_l1} with Corollary~\ref{cor:l1_optimization_rate},
\begin{align}
\mathbb{E}_{t^*}\!\Big[\sum_{i=1}^m \big|\text{linear mismatch}_i^{(t^*)}\big|\Big]
    &\;\le\; \mathbb{E}_{t^*}\!\Big[\sum_{i=1}^m \mathbb{E}_\rho\big[\|\pi_i^{(t^*)}-\pi_i^{\dagger,(t^*)}\|_1\big]\Big] \nonumber \\
    &\;\overset{\eqref{eq:l1_rate}}{\le}\; C_\nu\,\frac{\eta+\gamma}{\gamma}\,\mathcal{O}\!\Big(\frac{m}{\sqrt{T}}\Big)
    \;=\; \widetilde{\mathcal{O}}\!\Big(\frac{m}{\sqrt{T}}\Big).
    \label{eq:linear_mismatch_rate}
\end{align}
The linear mismatch therefore decays at the optimization rate $\widetilde{\mathcal{O}}(1/\sqrt{T})$, never with the dataset size $n$.
\end{enumerate}

\subsubsection{Concentrability shift to the offline distribution}
\label{app:conc-shift}

We now bound the squared evaluation error in \eqref{eq:bias_cancellation} by the in-sample squared regression error under the offline distribution $\mu$. The argument has three steps: (i) reduce the $\|\cdot\|_\infty$ over actions to a worst-case $\max_{a_i}$ via triangle and Jensen inequalities, (ii) shift the opponent profile from the iterate $\hat{\pi}_{-i}$ to the reference $\pi_{-i}^{\mathrm{ref}}$ at the cost of $C_{\mathrm{shift}}\coloneqq\exp(\eta(m-1))$, (iii) shift the resulting reference-anchored deviation distribution to $\mu$ at the cost of $C_{\mathrm{uni}}$ via Assumption~\ref{ass:concentrability}.

\paragraph{Step (i) --- Triangle and Jensen inequalities.} Using the marginalized identity \eqref{eq:Q_in_Z} and the triangle inequality applied at each $a_i$,
\begin{align}
\big\|Q_i^{\hat{\pi}_{-i}}(x,\cdot) - \hat{Q}_i(x,\cdot,\hat{\pi}_{-i})\big\|_\infty
    &\;=\; \max_{a_i}\,\Big|\mathbb{E}_{\boldsymbol{a}_{-i}\sim\hat{\pi}_{-i}}\!\big[-\mathcal{Z}_i(x,a_i,\boldsymbol{a}_{-i})\big]\Big| \nonumber \\
    &\;\le\; \max_{a_i}\,\mathbb{E}_{\boldsymbol{a}_{-i}\sim\hat{\pi}_{-i}}\!\big[|\mathcal{Z}_i(x,a_i,\boldsymbol{a}_{-i})|\big].
\end{align}
Squaring both sides and applying Jensen's inequality $(\mathbb{E} u)^2 \le \mathbb{E}[u^2]$ with $u=|\mathcal{Z}_i|$ inside the inner expectation,
\begin{align}
\big\|Q_i^{\hat{\pi}_{-i}}(x,\cdot)-\hat{Q}_i(x,\cdot,\hat{\pi}_{-i})\big\|_\infty^2
    &\;\le\; \Big(\max_{a_i}\,\mathbb{E}_{\boldsymbol{a}_{-i}\sim\hat{\pi}_{-i}}|\mathcal{Z}_i|\Big)^{\!2} \nonumber \\
    &\;\le\; \max_{a_i}\,\mathbb{E}_{\boldsymbol{a}_{-i}\sim\hat{\pi}_{-i}}\!\big[\mathcal{Z}_i(x,a_i,\boldsymbol{a}_{-i})^2\big].
\label{eq:triangle_jensen}
\end{align}
Taking expectations over $x\sim\rho$ and multiplying by $\eta/2$,
\begin{align}
\tfrac{\eta}{2}\,\mathbb{E}_{x\sim\rho}\!\Big[\big\|Q_i^{\hat{\pi}_{-i}}(\cdot)-\hat{Q}_i(\cdot,\hat{\pi}_{-i})\big\|_\infty^2\Big]
    \;\le\; \tfrac{\eta}{2}\,\mathbb{E}_{x\sim\rho}\!\Big[\max_{a_i}\,\mathbb{E}_{\boldsymbol{a}_{-i}\sim\hat{\pi}_{-i}}\!\big[\mathcal{Z}_i^2\big]\Big].
\label{eq:after_step1}
\end{align}

\paragraph{Step (ii) --- Opponent shift to the reference policy.} Each iterate $\pi_j^{(t)}$ takes a regularized Gibbs form anchored to $\pi_j^{\mathrm{ref}}$. By induction on $t$ from \eqref{eq:opmd_update}, there exists a function $g_j^{(t)}:\mathcal{X}\times\mathcal{A}_j\to\mathbb{R}$ that is a convex combination of empirical $Q$-values bounded in $[0,1]$ such that
\begin{align}
\pi_j^{(t)}(a_j|x) \;=\; \frac{\pi_j^{\mathrm{ref}}(a_j|x)\,\exp\!\big(\eta\,g_j^{(t)}(x,a_j)\big)}{\sum_{a'\in\mathcal{A}_j} \pi_j^{\mathrm{ref}}(a'|x)\,\exp(\eta\,g_j^{(t)}(x,a'))},\qquad g_j^{(t)}(x,a_j) \in [0,1].
\end{align}
The same form holds with $g_j = \bar{Q}_j$ for the regularized empirical NE returned by ROPE. Hence the per-player density ratio satisfies
\begin{align}
\frac{\pi_j^{(t)}(a_j|x)}{\pi_j^{\mathrm{ref}}(a_j|x)}
    \;=\; \frac{\exp(\eta\,g_j^{(t)}(x,a_j))}{\sum_{a'}\pi_j^{\mathrm{ref}}(a'|x)\,\exp(\eta\,g_j^{(t)}(x,a'))}
    \;\le\; \frac{\exp(\eta\,\max_a g_j^{(t)})}{\exp(\eta\,\min_a g_j^{(t)})}
    \;\le\; e^{\eta},
\end{align}
where the final inequality uses $g_j^{(t)}\in[0,1]$, so the numerator is at most $\exp(\eta\cdot 1)=e^\eta$ and the denominator (a convex combination of $\exp(\eta\,g_j^{(t)})\ge 1$) is at least $\exp(\eta\cdot 0)=1$.
Multiplying these per-player ratios across the $m-1$ opponents of player $i$ (using independence of marginals under the product policy),
\begin{align}
\frac{\hat{\pi}_{-i}(\boldsymbol{a}_{-i}|x)}{\pi_{-i}^{\mathrm{ref}}(\boldsymbol{a}_{-i}|x)}
    \;=\; \prod_{j\ne i}\frac{\hat{\pi}_j(a_j|x)}{\pi_j^{\mathrm{ref}}(a_j|x)}
    \;\le\; \exp\!\big(\eta(m-1)\big) \;\eqqcolon\; C_{\mathrm{shift}}.
\label{eq:Cshift_def}
\end{align}
Substituting \eqref{eq:Cshift_def} into the inner expectation in \eqref{eq:after_step1},
\begin{align}
\mathbb{E}_{\boldsymbol{a}_{-i}\sim\hat{\pi}_{-i}}\!\big[\mathcal{Z}_i(x,a_i,\boldsymbol{a}_{-i})^2\big]
    &\;=\; \sum_{\boldsymbol{a}_{-i}}\hat{\pi}_{-i}(\boldsymbol{a}_{-i}|x)\,\mathcal{Z}_i(x,a_i,\boldsymbol{a}_{-i})^2 \nonumber \\
    &\;\le\; C_{\mathrm{shift}}\,\sum_{\boldsymbol{a}_{-i}}\pi_{-i}^{\mathrm{ref}}(\boldsymbol{a}_{-i}|x)\,\mathcal{Z}_i(x,a_i,\boldsymbol{a}_{-i})^2 \nonumber \\
    &\;=\; C_{\mathrm{shift}}\,\mathbb{E}_{\boldsymbol{a}_{-i}\sim\pi_{-i}^{\mathrm{ref}}}\!\big[\mathcal{Z}_i(x,a_i,\boldsymbol{a}_{-i})^2\big],
\label{eq:opponent_shift}
\end{align}
which holds uniformly in $a_i$. Substituting \eqref{eq:opponent_shift} into \eqref{eq:after_step1},
\begin{align}
\tfrac{\eta}{2}\,\mathbb{E}_{x\sim\rho}\!\Big[\max_{a_i}\,\mathbb{E}_{\boldsymbol{a}_{-i}\sim\hat{\pi}_{-i}}\!\big[\mathcal{Z}_i^2\big]\Big]
    \;\le\; \tfrac{\eta\,C_{\mathrm{shift}}}{2}\,\mathbb{E}_{x\sim\rho}\!\Big[\max_{a_i}\,\mathbb{E}_{\boldsymbol{a}_{-i}\sim\pi_{-i}^{\mathrm{ref}}}\!\big[\mathcal{Z}_i^2\big]\Big].
\label{eq:after_step2}
\end{align}

\paragraph{Step (iii) --- Unilateral concentrability shift to $\mu$.} The remaining $\max_{a_i}$ in \eqref{eq:after_step2} is digested by introducing a deterministic deviation. Define
\begin{align}
\tilde{\pi}_i(\cdot|x) \;\coloneqq\; \delta_{a_i^\star(x)}(\cdot),
\qquad a_i^\star(x) \;\coloneqq\; \arg\max_{a_i}\,\mathbb{E}_{\boldsymbol{a}_{-i}\sim\pi_{-i}^{\mathrm{ref}}}\!\big[\mathcal{Z}_i(x,a_i,\boldsymbol{a}_{-i})^2\big],
\label{eq:tilde_pi_def}
\end{align}
and the joint deviation profile $\tilde{\pi} \coloneqq (\tilde{\pi}_i,\pi_{-i}^{\mathrm{ref}})$. By construction $\tilde{\pi}\in\Pi_{\mathrm{ref-uni}}$ (player $i$ deviates while the others adhere to the reference). With this construction the $\max_{a_i}$ becomes an expectation under $\tilde{\pi}_i$:
\begin{align}
\mathbb{E}_{x\sim\rho}\!\Big[\max_{a_i}\,\mathbb{E}_{\boldsymbol{a}_{-i}\sim\pi_{-i}^{\mathrm{ref}}}\!\big[\mathcal{Z}_i^2\big]\Big]
    &\;=\; \mathbb{E}_{x\sim\rho,\boldsymbol{a}\sim\tilde{\pi}(\cdot|x)}\!\big[\mathcal{Z}_i(x,\boldsymbol{a})^2\big] \nonumber \\
    &\;=\; \mathbb{E}_{(x,\boldsymbol{a})\sim\mu}\!\Big[\frac{\rho(x)\,\tilde{\pi}(\boldsymbol{a}|x)}{\mu(x,\boldsymbol{a})}\,\mathcal{Z}_i(x,\boldsymbol{a})^2\Big].
\label{eq:max_to_mu}
\end{align}
Applying Assumption~\ref{ass:concentrability} to the reference-anchored profile $\tilde{\pi}\in\Pi_{\mathrm{ref-uni}}$,
\begin{align}
\frac{\rho(x)\,\tilde{\pi}(\boldsymbol{a}|x)}{\mu(x,\boldsymbol{a})} \;\le\; C_{\mathrm{uni}}, \qquad \mu\text{-a.e.},
\end{align}
which substituted into \eqref{eq:max_to_mu} gives
\begin{align}
\mathbb{E}_{x\sim\rho}\!\Big[\max_{a_i}\,\mathbb{E}_{\boldsymbol{a}_{-i}\sim\pi_{-i}^{\mathrm{ref}}}\!\big[\mathcal{Z}_i^2\big]\Big]
    \;\le\; C_{\mathrm{uni}}\,\mathbb{E}_{(x,\boldsymbol{a})\sim\mu}\!\big[\mathcal{Z}_i(x,\boldsymbol{a})^2\big].
\label{eq:after_step3}
\end{align}
Combining \eqref{eq:after_step1}--\eqref{eq:after_step3}, the squared evaluation error in \eqref{eq:bias_cancellation} is bounded by the in-sample squared regression error under $\mu$:
\begin{align}
\tfrac{\eta}{2}\,\mathbb{E}_{x\sim\rho}\!\Big[\big\|Q_i^{\hat{\pi}_{-i}}(x,\cdot) - \hat{Q}_i(x,\cdot,\hat{\pi}_{-i})\big\|_\infty^2\Big]
    \;\le\; \tfrac{\eta\,C_{\mathrm{shift}}\,C_{\mathrm{uni}}}{2}\,\mathbb{E}_{(x,\boldsymbol{a})\sim\mu}\!\big[\mathcal{Z}_i(x,\boldsymbol{a})^2\big].
\label{eq:conc_shift}
\end{align}

\paragraph{Fast-rate generalization for least-squares.} The expected in-sample squared regression error is controlled by the standard fast-rate guarantee for empirical risk minimization with bounded outputs.

\begin{lemma}[Fast-rate generalization for least-squares]
\label{lem:mle_finite}
Let $\mathcal{Q}_i$ be a finite function class, and assume realizability $r_i\in\mathcal{Q}_i$ and bounded rewards $r_{\tau,i}\in[0,1]$. The least-squares estimator $\hat{Q}_i \in \arg\min_{f\in\mathcal{Q}_i}\sum_{\tau=1}^n (f(x_\tau,\boldsymbol{a}_\tau) - r_{\tau,i})^2$ satisfies, with probability at least $1-\delta$,
\begin{align}
    \mathbb{E}_{(x,\boldsymbol{a})\sim\mu}\!\big[\mathcal{Z}_i(x,\boldsymbol{a})^2\big]
    \;=\; \mathbb{E}_{(x,\boldsymbol{a})\sim\mu}\!\big[(\hat{Q}_i(x,\boldsymbol{a})-r_i(x,\boldsymbol{a}))^2\big]
    \;\le\; \frac{30\,\log(2|\mathcal{Q}_i|/\delta)}{n}.
\label{eq:mle_finite_bound}
\end{align}
\end{lemma}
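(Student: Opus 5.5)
The plan is the standard fast-rate least-squares argument: a Bernstein-type concentration for the excess squared loss, followed by a union bound over the finite class $\mathcal{Q}_i$. For each $f\in\mathcal{Q}_i$, define the per-sample excess loss
\[
X_\tau(f)\coloneqq (f(x_\tau,\boldsymbol{a}_\tau)-r_{\tau,i})^2-(r_i(x_\tau,\boldsymbol{a}_\tau)-r_{\tau,i})^2 .
\]
By realizability, the empirical minimizer satisfies $\sum_\tau X_\tau(\hat{Q}_i)\le 0$. The core of the argument is a one-sided concentration inequality showing that $\sum_\tau X_\tau(f)\le 0$ forces the population excess risk of $f$ to be small.

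\textbf{Moment computations.} Write $\Delta_f\coloneqq f-r_i$ and $\varepsilon_\tau\coloneqq r_{\tau,i}-r_i(x_\tau,\boldsymbol{a}_\tau)$. Expanding the square gives
\[
X_\tau(f)=\Delta_f(x_\tau,\boldsymbol{a}_\tau)^2-2\Delta_f(x_\tau,\boldsymbol{a}_\tau)\,\varepsilon_\tau .
\]
\begin{itemize}
\item Since $r_i$ is the conditional mean reward, $\mathbb{E}[\varepsilon_\tau\mid x_\tau,\boldsymbol{a}_\tau]=0$, so $\mathbb{E}[X_\tau(f)]=\mathbb{E}_\mu[\Delta_f^2]$.
\item All quantities lie in $[0,1]$, so $|X_\tau(f)|\le 1$ after centering.
\item The variance is self-bounding: $X_\tau=\Delta_f(\Delta_f-2\varepsilon_\tau)$ with $|\Delta_f-2\varepsilon_\tau|\le 2$, hence $\mathbb{E}[X_\tau^2]\le 4\,\mathbb{E}_\mu[\Delta_f^2]$.
\end{itemize}

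\textbf{Bernstein step.} Apply Bernstein's inequality to the i.i.d.\ variables $-X_\tau(f)$ for each fixed $f$, then union bound over $|\mathcal{Q}_i|$ (plus a factor $2$ for two-sidedness, giving $\log(2|\mathcal{Q}_i|/\delta)$). With probability at least $1-\delta$, simultaneously for all $f\in\mathcal{Q}_i$,
\[
\mathbb{E}_\mu[\Delta_f^2]-\frac1n\sum_\tau X_\tau(f)\;\le\;\sqrt{\frac{8\,\mathbb{E}_\mu[\Delta_f^2]\,L}{n}}+\frac{cL}{n},\qquad L\coloneqq\log(2|\mathcal{Q}_i|/\delta).
\]
Plug in $f=\hat{Q}_i$ and use $\sum_\tau X_\tau(\hat{Q}_i)\le 0$. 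This yields a quadratic inequality $u\le\sqrt{8uL/n}+cL/n$ in $u\coloneqq\mathbb{E}_\mu[\mathcal{Z}_i^2]$. Solving it, for instance via AM--GM $\sqrt{8uL/n}\le u/2+4L/n$, gives $u\le 2(4+c)L/n$. Tracking Bernstein's constant, roughly $c\le 2/3\cdot 2$ from the range, shows the total is comfortably below $30L/n$.

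\textbf{Main obstacle.} The delicate point is the variance bound. It is what turns the $1/\sqrt{n}$ Hoeffding rate into $1/n$, and it needs the cross term $\Delta_f\varepsilon_\tau$ to have zero mean. That holds because rewards are noisy observations whose conditional mean is exactly $r_i$, by realizability. I would state this conditional-mean interpretation explicitly, since the paper writes $r_i$ both as the mean function and as $r_{\tau,i}$ the observation. The remaining bookkeeping is constant-tracking to verify that $30$ suffices, and I would leave slack there rather than optimize it.
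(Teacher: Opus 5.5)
Your proposal is correct and is the standard Bernstein-plus-union-bound argument that the paper relies on. The paper states this lemma without proof, appealing only to the standard fast-rate guarantee for empirical risk minimization. Your self-bounding variance $\mathbb{E}[X_\tau(f)^2]\le 4\,\mathbb{E}_\mu[\Delta_f^2]$, the empirical optimality $\sum_\tau X_\tau(\hat{Q}_i)\le 0$, and the AM--GM step give $u\le \tfrac{32}{3}\log(2|\mathcal{Q}_i|/\delta)/n$, well within the stated constant $30$. One cosmetic fix: the bound $|X_\tau(f)|\le 1$ holds before centering. After centering, the one-sided range is $2$, and that is what your Bernstein constant $c=4/3$ actually uses.
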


Combining \eqref{eq:conc_shift} with Lemma~\ref{lem:mle_finite}, with probability at least $1-\delta$,
\begin{align}
\tfrac{\eta}{2}\,\mathbb{E}_{x\sim\rho}\!\Big[\big\|Q_i^{\hat{\pi}_{-i}}(x,\cdot) - \hat{Q}_i(x,\cdot,\hat{\pi}_{-i})\big\|_\infty^2\Big]
    \;\le\; \frac{15\,\eta\,C_{\mathrm{shift}}\,C_{\mathrm{uni}}\,\log(2|\mathcal{Q}_i|/\delta)}{n}.
\label{eq:per_player_squared}
\end{align}
This is the per-player squared evaluation error contribution; with $|\mathcal{Q}|\coloneqq\max_i|\mathcal{Q}_i|$, summing over $i\in[m]$ via a union bound (replacing $\delta\leftarrow\delta/m$) gives the total squared statistical residual $\widetilde{\mathcal{O}}\!\big(m\,\eta\,C_{\mathrm{shift}}\,C_{\mathrm{uni}}\,\log|\mathcal{Q}|/n\big)$.

\subsubsection{Closing the bound: idealized ROPE rate}
\label{app:close-rope}

Let $\hat{\pi}$ denote the policy returned by Algorithm~\ref{alg:ideal_rope}. Because $\hat{\pi}$ is an exact regularized NE of the empirical game, every $\hat{\pi}_i$ is the empirical regularized best response to $\hat{\pi}_{-i}$, so the potential ascent error and the linear mismatch both vanish identically:
\begin{align}
\varepsilon_{\mathrm{pot},i}(x) \;=\; \hat{V}_i^{\dagger,\hat{\pi}_{-i}}(x) - \hat{V}_i^{\hat{\pi}}(x) \;=\; 0,
\qquad
\hat{\pi}_i(\cdot|x) \;=\; \hat{\pi}_i^\dagger(\cdot|x),\quad\forall i,x.
\label{eq:rope_pot_vanishes}
\end{align}
Summing the per-iterate decomposition \eqref{eq:gap_decomp_named} over players, integrating over $x\sim\rho$, and using \eqref{eq:rope_pot_vanishes},
\begin{align}
\mathrm{Gap}_{\mathrm{NE}}(\hat{\pi})
    &\;=\; \sum_{i=1}^m \mathbb{E}_{x\sim\rho}\!\big[V_i^{\dagger,\hat{\pi}_{-i}}(x) - V_i^{\hat{\pi}}(x)\big] \nonumber\\
    &\;=\; \sum_{i=1}^m \mathbb{E}_{x\sim\rho}\!\big[\varepsilon_{\mathrm{pot},i}(x) + \delta_{\mathrm{BR},i}(x) + \delta_{\mathrm{Iter},i}(x)\big] \nonumber\\
    &\;\overset{\eqref{eq:rope_pot_vanishes}}{=}\; \sum_{i=1}^m \mathbb{E}_{x\sim\rho}\!\big[\delta_{\mathrm{BR},i}(x) + \delta_{\mathrm{Iter},i}(x)\big].
\label{eq:rope_chain_a}
\end{align}
Apply the bias-cancellation identity \eqref{eq:bias_cancellation} term by term. Since the linear mismatch is zero in the ROPE setting,
\begin{align}
\sum_{i=1}^m \mathbb{E}_{x\sim\rho}\!\big[\delta_{\mathrm{BR},i}(x) + \delta_{\mathrm{Iter},i}(x)\big]
    &\;\le\; \sum_{i=1}^m\Big\{\underbrace{\mathbb{E}_{x\sim\rho}[\text{linear mismatch}_i]}_{=\,0\text{ since }\hat{\pi}_i=\hat{\pi}_i^\dagger} \nonumber\\
    &\qquad\quad + \tfrac{\eta}{2}\,\mathbb{E}_{x\sim\rho}\!\Big[\big\|Q_i^{\hat{\pi}_{-i}}(x,\cdot)-\hat{Q}_i(x,\cdot,\hat{\pi}_{-i})\big\|_\infty^2\Big]\Big\} \nonumber\\
    &\;=\; \tfrac{\eta}{2}\sum_{i=1}^m \mathbb{E}_{x\sim\rho}\!\Big[\big\|Q_i^{\hat{\pi}_{-i}}(x,\cdot)-\hat{Q}_i(x,\cdot,\hat{\pi}_{-i})\big\|_\infty^2\Big].
\label{eq:rope_chain_b}
\end{align}
Apply the concentrability-shift bound \eqref{eq:conc_shift} and the fast-rate generalization Lemma~\ref{lem:mle_finite} to each summand:
\begin{align}
\tfrac{\eta}{2}\,\mathbb{E}_{x\sim\rho}\!\Big[\big\|Q_i^{\hat{\pi}_{-i}}(x,\cdot)-\hat{Q}_i(x,\cdot,\hat{\pi}_{-i})\big\|_\infty^2\Big]
    &\;\overset{\eqref{eq:conc_shift}}{\le}\; \tfrac{\eta\,C_{\mathrm{shift}}\,C_{\mathrm{uni}}}{2}\,\mathbb{E}_{(x,\boldsymbol{a})\sim\mu}\!\big[\mathcal{Z}_i(x,\boldsymbol{a})^2\big] \nonumber\\
    &\;\overset{\eqref{eq:mle_finite_bound}}{\le}\; \frac{15\,\eta\,C_{\mathrm{shift}}\,C_{\mathrm{uni}}\,\log(2|\mathcal{Q}_i|/\delta_i)}{n},
\label{eq:rope_chain_c}
\end{align}
each holding with probability $\ge 1-\delta_i$. Setting $\delta_i \coloneqq \delta/m$ and applying a union bound across $i\in[m]$, with probability at least $1-\delta$,
\begin{align}
\mathrm{Gap}_{\mathrm{NE}}(\hat{\pi})
    \;\le\; \sum_{i=1}^m \frac{15\,\eta\,C_{\mathrm{shift}}\,C_{\mathrm{uni}}\,\log(2m|\mathcal{Q}_i|/\delta)}{n}
    \;\le\; \frac{15\,m\,\eta\,C_{\mathrm{shift}}\,C_{\mathrm{uni}}\,\log(2m|\mathcal{Q}|/\delta)}{n},
\label{eq:rope_chain_d}
\end{align}
where $|\mathcal{Q}|\coloneqq\max_i|\mathcal{Q}_i|$. Substituting $C_{\mathrm{shift}}=\exp(\eta(m-1))$ and absorbing $\log(2m/\delta)$ into the $\widetilde{\mathcal{O}}$ notation,
\begin{align}
\mathrm{Gap}_{\mathrm{NE}}(\hat{\pi})
    \;\le\; \widetilde{\mathcal{O}}\!\left(\frac{m\,\eta\,C_{\mathrm{uni}}\,\exp(\eta(m-1))\,\log|\mathcal{Q}|}{n}\right),
\label{eq:rope_chain_final}
\end{align}
which is the fast $\widetilde{\mathcal{O}}(1/n)$ rate of Theorem~\ref{thm:fast_rate_rope}. Although the exact-NE oracle is intractable in generic general-sum games, Section~\ref{app:potential-bias} shows that the $\alpha$-potential structure aligns the equilibrium with the global maximization of $\hat{\Phi}$, motivating the practical OPMD analysis below.

\subsubsection{Closing the bound: practical OPMD rate}
\label{app:close-opmd}

We now turn to the iterate $\tilde{\pi} = \pi^{(t^*)}$ with $t^*\sim\mathrm{Unif}([T])$. Averaging the per-iterate decomposition \eqref{eq:gap_decomp_named} over $t^*$ and over players,
\begin{align}
\mathbb{E}_{t^*}\!\big[\mathrm{Gap}_{\mathrm{NE}}(\tilde{\pi})\big]
    \;=\; \frac{1}{T}\sum_{t=1}^T \sum_{i=1}^m \mathbb{E}_{x\sim\rho}\!\Big[\varepsilon_{\mathrm{pot},i}^{(t)}(x) + \delta_{\mathrm{BR},i}^{(t)}(x) + \delta_{\mathrm{Iter},i}^{(t)}(x)\Big].
\label{eq:opmd_avg_gap}
\end{align}
We bound the three contributions in turn.

\paragraph{Term 1 --- Potential ascent contribution.} By Lemma~\ref{lem:potential_ascent}, with stepsize $\gamma\le 1/L_\Phi$ and $L_\Phi=\mathcal{O}(m)$,
\begin{align}
\frac{1}{T}\sum_{t=1}^T\sum_{i=1}^m \mathbb{E}_{x\sim\rho}\!\big[\varepsilon_{\mathrm{pot},i}^{(t)}(x)\big]
    \;\le\; \mathcal{O}\!\Big(\frac{m}{\sqrt{T}}\Big) + m\alpha.
\label{eq:opmd_term_pot}
\end{align}

\paragraph{Term 2 --- Linear mismatch contribution.} Apply the bias-cancellation identity \eqref{eq:bias_cancellation} to each iterate and player. The two parts of the right-hand side give
\begin{align}
\frac{1}{T}\sum_{t=1}^T\sum_{i=1}^m \mathbb{E}_{x\sim\rho}\!\big[\delta_{\mathrm{BR},i}^{(t)}(x)+\delta_{\mathrm{Iter},i}^{(t)}(x)\big]
    &\;\le\; \frac{1}{T}\sum_{t,i}\mathbb{E}_{x\sim\rho}\!\big[|\text{linear mismatch}_i^{(t)}(x)|\big] \nonumber\\
    &\quad + \frac{\eta}{2T}\sum_{t,i}\mathbb{E}_{x\sim\rho}\!\Big[\big\|Q_i^{\pi_{-i}^{(t)}}(x,\cdot)-\hat{Q}_i(x,\cdot,\pi_{-i}^{(t)})\big\|_\infty^2\Big].
\label{eq:opmd_split}
\end{align}

The first sum is bounded by chaining \eqref{eq:linear_mismatch_l1}, Lemma~\ref{lem:l1_proportionality}, Cauchy--Schwarz, and the squared-step bound \eqref{eq:squared_step_bound}. Recalling $\|\mathcal{Z}_i\|_\infty\le 1$ (rewards and $\hat{Q}_i$ both lie in $[0,1]$),
\begin{align}
\frac{1}{T}\sum_{t,i}\mathbb{E}_\rho\big[|\text{linear mismatch}_i^{(t)}|\big]
    \;&\overset{\eqref{eq:linear_mismatch_l1}}{\le}\; \frac{1}{T}\sum_{t,i}\mathbb{E}_\rho\!\big[\|\pi_i^{(t)}-\pi_i^{\dagger,(t)}\|_1\big]\cdot \|\mathcal{Z}_i\|_\infty \nonumber\\
    &\;\overset{\eqref{eq:l1_proportionality}}{\le}\; \frac{C_\nu(\eta+\gamma)/\gamma}{T}\sum_{t,i}\mathbb{E}_\rho\!\big[\|\pi_i^{(t)}-\pi_i^{(t+1)}\|_1\big] \nonumber\\
    &\;\overset{\text{C.-S.}}{\le}\; \frac{C_\nu(\eta+\gamma)/\gamma}{T}\sqrt{Tm}\,\sqrt{\sum_{t,i}\mathbb{E}_\rho\!\big[\|\pi_i^{(t)}-\pi_i^{(t+1)}\|_1^2\big]} \nonumber\\
    &\;\overset{\eqref{eq:squared_step_bound}}{\le}\; C_\nu\,\frac{\eta+\gamma}{\gamma}\,\sqrt{\frac{m\cdot \mathcal{O}(m)}{T}}
    \;=\; \mathcal{O}\!\Big(C_\nu\,\frac{\eta+\gamma}{\gamma}\,\frac{m\sqrt{m}}{\sqrt{T}}\Big).
\label{eq:opmd_term_mismatch}
\end{align}
Treating $\eta,\gamma,\nu_{\min}$ as problem-dependent constants, this is $\widetilde{\mathcal{O}}(m\sqrt{m/T})$.

\paragraph{Term 3 --- Squared statistical residual contribution.} Applying the concentrability shift \eqref{eq:conc_shift} pointwise in $t$,
\begin{align}
\frac{\eta}{2T}\sum_{t=1}^T\sum_{i=1}^m\mathbb{E}_{x\sim\rho}\!\Big[\big\|Q_i^{\pi_{-i}^{(t)}}(x,\cdot)-\hat{Q}_i(x,\cdot,\pi_{-i}^{(t)})\big\|_\infty^2\Big]
    &\;\overset{\eqref{eq:conc_shift}}{\le}\; \frac{\eta\,C_{\mathrm{shift}}\,C_{\mathrm{uni}}}{2T}\sum_{t,i}\mathbb{E}_{(x,\boldsymbol{a})\sim\mu}\!\big[\mathcal{Z}_i^{(t)}(x,\boldsymbol{a})^2\big].
\end{align}
The pointwise estimator $\hat{Q}_i$ does not depend on $t$ (it is computed once from the offline dataset before any iteration), so $\mathcal{Z}_i^{(t)}\equiv \mathcal{Z}_i$. Applying Lemma~\ref{lem:mle_finite} per player with confidence $\delta/m$ and union-bounding across $i\in[m]$, with probability at least $1-\delta$,
\begin{align}
\frac{\eta\,C_{\mathrm{shift}}\,C_{\mathrm{uni}}}{2T}\sum_{t,i}\mathbb{E}_\mu\!\big[\mathcal{Z}_i(x,\boldsymbol{a})^2\big]
    &\;=\; \frac{\eta\,C_{\mathrm{shift}}\,C_{\mathrm{uni}}}{2}\sum_{i=1}^m \mathbb{E}_\mu\!\big[\mathcal{Z}_i^2\big] \nonumber\\
    &\;\overset{\eqref{eq:mle_finite_bound}}{\le}\; \frac{15\,m\,\eta\,C_{\mathrm{shift}}\,C_{\mathrm{uni}}\,\log(2m|\mathcal{Q}|/\delta)}{n}
    \;=\; \widetilde{\mathcal{O}}\!\Big(\frac{m\,\eta\,C_{\mathrm{shift}}\,C_{\mathrm{uni}}\,\log|\mathcal{Q}|}{n}\Big).
\label{eq:opmd_term_stat}
\end{align}

\paragraph{Final synthesis.} Substituting \eqref{eq:opmd_term_pot}, \eqref{eq:opmd_term_mismatch}, and \eqref{eq:opmd_term_stat} into \eqref{eq:opmd_avg_gap} via \eqref{eq:opmd_split},
\begin{align}
\mathbb{E}_{t^*}\!\big[\mathrm{Gap}_{\mathrm{NE}}(\tilde{\pi})\big]
    &\;\le\; \underbrace{\mathcal{O}\!\Big(\frac{m}{\sqrt{T}}\Big)}_{\text{potential ascent }(\varepsilon_{\mathrm{pot}})}
    + \underbrace{\widetilde{\mathcal{O}}\!\Big(\frac{m\sqrt{m}}{\sqrt{T}}\Big)}_{\text{geometric mismatch }(\delta_{\mathrm{BR}}+\delta_{\mathrm{Iter}})} \nonumber\\
    &\quad + \underbrace{m\alpha}_{\text{bias floor}}
    + \underbrace{\widetilde{\mathcal{O}}\!\Big(\frac{m\,\eta\,C_{\mathrm{shift}}\,C_{\mathrm{uni}}\,\log|\mathcal{Q}|}{n}\Big)}_{\text{squared statistical residual}}.
\label{eq:opmd_final_decomp}
\end{align}
Both optimization rates decay as $\mathcal{O}(T^{-1/2})$. Setting $T\ge n^2$ makes the optimization terms strictly dominated by the statistical residual: $\mathcal{O}(m/\sqrt{T}) + \widetilde{\mathcal{O}}(m\sqrt{m}/\sqrt{T}) \le \widetilde{\mathcal{O}}(1/n)$. Substituting into \eqref{eq:opmd_final_decomp},
\begin{align}
\mathbb{E}_{t^*}\!\big[\mathrm{Gap}_{\mathrm{NE}}(\tilde{\pi})\big]
    \;\le\; \widetilde{\mathcal{O}}\!\Big(\frac{1}{n}\Big) + m\alpha,
\label{eq:opmd_one_over_n}
\end{align}
with hidden constant scaling as $m\,\eta\,C_{\mathrm{shift}}\,C_{\mathrm{uni}}\,\log|\mathcal{Q}|$ and $C_{\mathrm{shift}}=\exp(\eta(m-1))$. The exponential dependence on $\eta(m-1)$ is the price of converting opponent-policy density ratios via the Gibbs form of the iterates and is characteristic of KL-regularized analyses (cf.\ \citep{ye2024online}). This proves Theorem~\ref{thm:opmd_rate}.



\end{document}